\newtheorem{theorem}{Theorem}
\newtheorem{lemma}{Lemma}
\newenvironment{proof}[1][Proof]{\noindent\textbf{#1.}\quad}{\hfill$\square$}
\newtheorem{rmk}{Remark}
\begin{document}

\begin{frontmatter}
		\title{An Orthogonal Basis Approach to Formation Shape Control (Extended Version) \thanksref{footnoteinfo}}
		
		\thanks[footnoteinfo]{This paper is the extended version of our paper submitted to Automatica with the proof of some lemmas. Corresponding author M.~de~Queiroz. Tel. +1-225-578-8770. }
		
	\author[USA1]{Tairan Liu}\ead{Tairan.Liu@uga.edu},
	\author[USA2]{Marcio de Queiroz}\ead{mdeque1@lsu.edu}
	
	\address[USA1]{School of Electrical and Computer Engineering, University of Georgia,
		Athens, GA 30602, USA}  
	
	\address[USA2]{Department of Mechanical and Industrial Engineering, Louisiana State University,
		Baton Rouge, LA 70803, USA} 
		
		\begin{keyword}
			Formation Control; Orthogonal Basis; Distance; Signed Volume.
		\end{keyword}

	\begin{abstract}
In this paper, we propose a novel approach to the problem of augmenting distance-based formation controllers with a 
secondary constraint for the purpose of preventing 3D formation ambiguities. Specifically, we introduce three 
controlled variables that form an orthogonal space and uniquely characterize a tetrahedron 
formation in 3D. This orthogonal space incorporates constraints on the inter-agent distances and the
signed volume of tetrahedron substructures. The formation is modeled using a directed graph with a leader-follower 
type configuration and single-integrator dynamics. We show that the proposed decentralized formation controller ensures the
\textit{global} asymptotic stability and the local exponential stability of the desired formation for an \textit{n}-agent system with no ambiguities. 
Unlike previous work, this result is achieved without conditions on the tetrahedrons that form the desired formation or on the control 
gains. 
	\end{abstract}
		
	\end{frontmatter}

\section{Introduction}

In formation shape control, a group of interacting mobile agents are
commanded to acquire and maintain a desired geometric pattern in space. A
well-known method for solving this problem involves regulating a set of
inter-agent distances to values prescribed by the desired shape \citep{de2019formation,krick2009stabilisation}. This method, which is commonly
referred to as distance-based formation control \citep{oh2015survey}, has the
main advantage of being implementable in a fully decentralized manner.
However, this advantage comes with the limitation that the set of
inter-agent distances may not uniquely define the formation position and
orientation in space. Mathematically, this nonuniqueness is related to the
existence of multiple equilibrium points in the multi-agent system distance
dynamics. The question then is how do you steer the system away from the
undesired equilibria and towards the equilibrium corresponding to the
desired formation shape (up to translation and rotation).

The above question is partially answered by requiring the formation graph to
be rigid, which imposes a minimum number of distances to be controlled. This
reduces the unwanted equilibrium points\ to formations that are flipped or
reflected versions of the desired shape. Here, the agents' initial
conditions determine whether the formation converges to one that is
isomorphic to the desired formation or to a flipped/reflected formation.
This implies that rigid distance-based controllers are locally stable.

In recent years, some methods have been introduced to address the limitation
of distance-based formation controllers. The common feature of these methods
is the use of an additional controlled variable (or constraint) that is
capable of distinguishing formation ambiguities. In \cite{mou2015target}, an
approach called target-point control was introduced to rule out the
undesirable equilibria in planar formations. The inter-agent distances and
the order of agents were used to calculate the desired position of the
agents, i.e., target position, in a local coordinate frame. However, the
leader and the first follower cannot be collocated at time zero and the
leader's motion needs to satisfy certain conditions. A similar method was
used in \cite{kang2017distance} with the name ``desired order of neighbors''. In \cite{ferreira2016distance},
inter-agent distance and angular constraints were employed to enlarge the
region of attraction to the desired planar formation by a proper choice of
control gains. In \cite{anderson2017formation}, the signed area of a
triangle was used as the second controlled variable, and convergence
analyses were conducted for special cases of 3- and 4-agent planar
formations. In \cite{sugie2018hierarchical}, the authors further explored
the idea of \cite{anderson2017formation} by applying the area constraints to
only a subset of agents and thus extended the method to $n$ agents. However,
the result in \cite{sugie2018hierarchical} required the triangulated
formation to be composed of equilateral triangles. Recently in \cite%
{liu2019further}, we extended the distance/signed area method to directed 2D
formations of $n$ agents and introduced the concept of strong congruency. In
this result, the triangulations were not restricted to equilateral ones;
however, certain triangulation and control gain conditions had to be met to
prove the asymptotic stability of the desired formation. In \cite{cao2019almost}, a specific control gain value in the signed area term was
found that causes the multi-agent system to have one stable equilibrium
point corresponding to the desired formation and some discrete unstable
equilibria. A switching control strategy based on the signed area or edge
angle was proposed in \cite{liu2020switching} to remove the restrictions on
the shape of the desired formation. In \cite{liu2020distance}, we designed a
non-switching, distance/edge angle-based controller that ensures the
almost-global asymptotic stability of the desired formation under certain
conditions on the triangulations. Recently in \cite{jing2020multiagent}, a
formation controller based on angles and the ``sign'' of the triangulated framework was shown to guarantee almost-global convergence of the angle errors.

For the 3D formation problem, relatively few results exist. For example, 
\cite{ferreira2016adaptive} extended the method in \cite%
{ferreira2016distance} to 3D by using distance and volume constraints.
However, unless the control gains satisfy a persistency of excitation-type
condition, the system under the control of \cite{ferreira2016adaptive} may
still converge to an undesired formation shape. In \cite{lan2018adaptive},
volume constraints were applied to a 4-agent system to distinguish the two
possible orientations of a tetrahedron under the assumption that three of
the agents are at their desired distances.

In this paper, we address the problem of using additional feedback variables
in the distance-based 3D formation controller. Specifically, we introduce a
new method called the \textit{orthogonal basis approach} which decomposes
the feedback variables and control inputs into three orthogonal subspaces.
By applying this decomposition to directed frameworks formed by
tetrahedrons, we are able to guarantee the \textit{global} asymptotic
stability of the desired formation. Moreover, we can show the desired
formation is locally exponentially stable which provides robustness to the
system. These results are achieved with no limitations on the
``tetrahedralizations'' of the desired
formation, control gains, or number of agents. Thus, this work greatly
extends the applicability of the dual-feedback-variable approach for
avoiding 3D formation ambiguities. To the best of our knowledge, it is the
first result to show convergence to the desired 3D formation for all initial
conditions, including collocated and collinear agents. A preliminary version
of our orthogonal basis approach appeared in \cite{liu2020ortho}, where it
was applied to planar formations and achieved almost-global asymptotic
stability of the desired formation.

\section{Background Material \label{sec:back-mat}}

Some background material is reviewed in this section. Throughout the paper,
we use the following vector notation: $x\in \mathbb{R}^{n}$ or $x=\left[
x_{1},\ldots ,x_{n}\right] $ denotes an $n\times 1$ (column) vector, and $x=%
\left[ x_{1},\ldots ,x_{n}\right] $ where $x_{i}\in \mathbb{R}^{m}$ is the
stacked $mn\times 1$ vector.

\subsection{Graph Theory \label{Sec:graph-theory}}

An undirected graph $G$ is represented by a pair $(\mathcal{V},\mathcal{E}%
^{u})$, where $\mathcal{V}=\{1,2,\ldots ,N\}$ is the set of vertices and $%
\mathcal{E}^{u}=\{(i,j)|\,i,j\in \mathcal{V},i\neq j\}\subset \mathcal{V}%
\times \mathcal{V}$ is the set of undirected edges. A directed graph $G$ is
a pair $(\mathcal{V},\mathcal{E}^{d})$ where the edge set $\mathcal{E}^{d}$
is directed in the sense that if $(i,j)\in \mathcal{E}^{d}$ then $i$ is the
source vertex of the edge and $j$ is the sink vertex. The set of neighbors
of vertex $i\in \mathcal{V}$ is defined as $\mathcal{N}_{i}(\mathcal{E}%
^{d})=\{j\in \mathcal{V}|(i,j)\in \mathcal{E}^{d}\}$. For $i\in \mathcal{V}$%
, the out-degree of $i$ (denoted by $\text{out}(i)$) is the number of edges
in $\mathcal{E}^{d}$ whose source is vertex $i$ and sinks are in $\mathcal{V}%
-\{i\}$. If $p_{i}\in \mathbb{R}^{3}$ is the coordinate of the $i$th vertex
of a 3D graph, then a framework $F$ is defined as the pair $(G,p)$ where $p=%
\left[ p_{1},\ldots ,p_{N}\right] \in \mathbb{R}^{3N}$.

Let the map $\mathcal{T}:\mathbb{R}^{3}\rightarrow \mathbb{R}^{3}$ be such
that $\mathcal{T}(x)=\mathcal{R}x+d$ where $\mathcal{R}\in SO(3)$ and $d\in 
\mathbb{R}^{3}$. A framework $F=\left( G,p\right) $ is rigid in $\mathbb{R}%
^{3}$ if all of its continuous motions satisfy $p_{i}(t)=\mathcal{T}(p_{i})$
for all $i=1,\ldots ,N$ and $\forall t\geq 0$ \citep{asimow1979rigidity,izmestiev2009infinitesimal}. A 3D rigid framework is
minimally rigid if and only if $\left\vert \mathcal{E}^{u}\right\vert =3N-6$ 
\citep{anderson2008rigid}. The edge function of a minimally rigid framework $%
\gamma :\mathbb{R}^{3N}\rightarrow \mathbb{R}^{3N-3(3+1)/2}$ is defined as 
\begin{equation}
\gamma (p)=\left[ \ldots ,\left\Vert p_{i}-p_{j}\right\Vert ,\ldots \right] , \, (i,j) \in \mathcal{E}^{u}  \label{edge function}
\end{equation}%
such that its $l$th component, $\left\Vert p_{i}-p_{j}\right\Vert $, relates
to the $l$th edge of $\mathcal{E}^{u}$ connecting the $i$th and $j$th
vertices. Frameworks $(G,p)$ and $(G,\hat{p})$ are equivalent if $\gamma
(p)=\gamma (\hat{p})$, and are congruent if $\left\Vert
p_{i}-p_{j}\right\Vert =\left\Vert \hat{p}_{i}-\hat{p}_{j}\right\Vert $ for
all distinct vertices $i$ and $j$ in $\mathcal{V}$ \citep{jackson2007notes}.
If rigid frameworks $(G,p)$ and $(G,\hat{p})$ are equivalent but not
congruent, they are flip- or flex-ambiguous \citep{anderson2008rigid}.

Frameworks based on directed graphs are required to be constraint consistent
and persistent to maintain its shape \citep{yu2007three}. A persistent graph
is said to be minimally persistent if no single edge can be removed without
losing persistence. A sufficient condition for a directed graph $\left( 
\mathcal{V},\mathcal{E}^{d}\right) $ in $\mathbb{R}^{3}$ to be constraint
consistent is $\text{out}(i)\leq 3$ for all $i\in \mathcal{V}$ (see Lemma 5
of \cite{yu2007three}). A necessary condition for a graph in $\mathbb{R}^{3}$
to be minimally persistent is $\text{out}(i)\leq 3$ for all $i\in \mathcal{V}
$, while a sufficient condition is minimal rigidity \citep{yu2007three}. A
minimally persistent graph can be constructed by the 3D Henneberg insertion
of type I\footnote{%
As shown in \cite{grasegger2018lower}, the 3D Henneberg insertion of type I
does not allow edge splitting operations \citep{eren2005information} in the
graph construction procedure.} \citep{grasegger2018lower}. This method starts
with three vertices with three directed edges, and grows the graph by
iteratively adding a vertex with three outgoing edges. Henceforth, we refer
to a framework constructed in this manner as a 3D Henneberg framework.

\subsection{Strong Congruency \label{Sec: Strong Congr}}

The concept of congruency defined above can distinguish between two
frameworks that are flip- or flex-ambiguous, but does not capture a third
type of ambiguity --- a reflection of the whole framework. In \cite%
{liu2019further}, we introduced the concept of \textit{strong congruency} to
handle this type of ambiguity in 2D. Specifically, Henneberg frameworks $%
F=(G,p)$ and $\hat{F}=(G,\hat{p})$ are said to be \textit{strongly congruent}
if they are congruent and not reflected versions of each other.

In \cite{liu2019further} (Lemma 2.2), it was shown that the signed area of a
triangular framework in addition to a set of edge lengths can be used to
ensure strong congruency in 2D. In order to extend this concept to 3D, we
will employ the \textit{signed volume of a tetrahedron, }$V:\mathbb{R}%
^{12}\rightarrow \mathbb{R}$ \citep{mallison1935use}: 
\begin{align}
V(p)=& \frac{1}{6}\det \left[ 
\begin{array}{cccc}
1 & 1 & 1 & 1 \\ 
p_{1} & p_{2} & p_{3} & p_{4}%
\end{array}%
\right]  \notag \\
=& - \frac{1}{6}\left( p_{1}-p_{4}\right) ^{\intercal }\left[ \left(
p_{2}-p_{4}\right) \times \left( p_{3}-p_{4}\right) \right]
\label{eq:signed-volume}
\end{align}
where $p=\left[ p_{1},p_{2},p_{3},p_{4}\right] $. If the order of vertices $%
1,2,3$ is counterclockwise (resp., clockwise) from an observer located at
vertex $4$ facing the $1$-$2$-$3$ plane, then (\ref{eq:signed-volume}) is
positive (resp., negative). Moreover, this quantity is zero if any three
vertices are collinear or the four vertices are coplanar. As an example of
the use of the signed volume, all the frameworks in Figure \ref{fig:3d-scgt}
are congruent, but only frameworks (a), (b), and (c) are strongly congruent.
Hereafter, we denote the set of all $3$-dimensional frameworks that are
strongly congruent to framework $F$ by $\text{SCgt}^{3}(F)$. 
\begin{figure}[tbph]
\centering
\adjincludegraphics[scale=0.45, trim={{0.0\width} {0.0\height} {0.0\width}
{0.0\height}},clip]{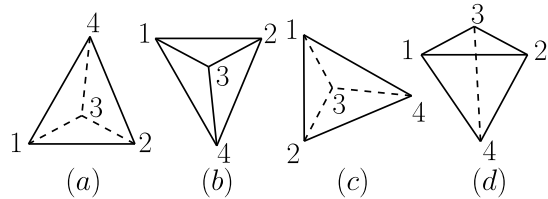}
\caption{All four 3D frameworks are congruent, but only the ones in (a),
(b), and (c) are strongly congruent.}
\label{fig:3d-scgt}
\end{figure}

A 3D Henneberg framework can be divided into tetrahedral sub-frameworks. In
such cases, the signed volume of a 3D Henneberg framework with $N$ vertices
and directed edge set $\mathcal{E}^{d}$, $\mathbf{V}:\mathbb{R}^{3N}\rightarrow \mathbb{R}^{N-3}$, is defined as 
\begin{align}
& \mathbf{V}(p)=\left[ \ldots ,\frac{1}{6}\det \left[ 
\begin{array}{cccc}
1 & 1 & 1 & 1 \\ 
p_{i} & p_{j} & p_{k} & p_{l}%
\end{array}%
\right] ,\ldots \right] ,  \notag \\
& \qquad \forall (l,i),(l,j),(l,k)\in \mathcal{E}^{d}-\{(2,1),(3,1),(3,2)\}
\end{align}%
where $p=\left[p_{1},\ldots ,p_{N}\right]$ and its $n$th component is the
signed volume of the $n$th tetrahedron constructed with vertices $i<j<k<l$.
For example, the signed volume of the framework in Figure \ref%
{fig:signed-volume-framework-example} is given by 
\begin{equation}
\mathbf{V}(p)=\left[ 
\begin{array}{l}
- \frac{1}{6}(p_{1}-p_{4})^{\intercal }\left[ (p_{2}-p_{4})\times
(p_{3}-p_{4}) \right] \\ 
- \frac{1}{6}(p_{1}-p_{5})^{\intercal }\left[ (p_{3}-p_{5})\times
(p_{4}-p_{5}) \right]%
\end{array}%
\right]  \label{Psi}
\end{equation}%
where the first element is positive and the second one negative. Note that
if $\hat{F}$ is a reflected version of $F$, then $\mathbf{V}(p)=-\mathbf{V}(%
\hat{p})$. 
\begin{figure}[tbph]
\centering
\includegraphics[width=0.45\linewidth]{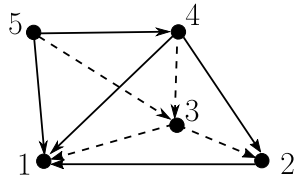}
\caption{Framework with two tetrahedrons.}
\label{fig:signed-volume-framework-example}
\end{figure}

\begin{lemma}
\label{lem:scgt-3d}3D Henneberg frameworks $F=(G,p)$ and $\hat{F}=(G, \hat{p}) $ are strongly congruent if and only if they are equivalent and $\mathbf{V}
(p)=\mathbf{V}(\hat{p})$. (See Appendix \ref{Sec:proof:scgt-3d} for proof.)
\end{lemma}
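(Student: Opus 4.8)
The plan is to prove the two implications separately. The forward implication is easy and rests on the invariance of the signed volume under proper rigid motions; the converse is the substantive part, and I would prove it by induction on the number of vertices $N$, exploiting the layered ``insertion'' structure of a 3D Henneberg framework. Throughout I read ``$F$ and $\hat F$ strongly congruent'' as ``$\hat p$ is obtained from $p$ by a proper rigid motion $x\mapsto\mathcal R x+d$ with $\mathcal R\in SO(3)$'' (consistent with Figure \ref{fig:3d-scgt}).

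For $(\Rightarrow)$: strong congruency implies congruency, and congruency implies equivalence since the edges of $\mathcal E^{u}$ form a subset of all inter-vertex pairs. Writing $\hat p_{m}=\mathcal R p_{m}+d$ with $\mathcal R\in SO(3)$ gives $\hat p_{a}-\hat p_{b}=\mathcal R(p_{a}-p_{b})$ for every pair; since each component of $\mathbf V$ is (a multiple of) a scalar triple product of such differences, the identities $(\mathcal R u)\times(\mathcal R v)=(\det\mathcal R)\mathcal R(u\times v)=\mathcal R(u\times v)$ and $(\mathcal R w)^{\intercal}\mathcal R=w^{\intercal}$ leave every component unchanged, so $\mathbf V(p)=\mathbf V(\hat p)$.

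For $(\Leftarrow)$, assuming $F$ and $\hat F$ are equivalent with $\mathbf V(p)=\mathbf V(\hat p)$, I would show they are related by a proper rigid motion (which gives both congruency and the absence of reflection). In the base case $N=3$ the framework is the base triangle, which uses all $3N-6=3$ edges, so equivalence already yields congruency; any isometry carrying $p$ to $\hat p$ can be made proper by composing, if needed, with the reflection across the plane of $p_{1},p_{2},p_{3}$. For the inductive step with $N\geq 4$, let $N$ be the last inserted vertex, with outgoing edges to $i<j<k<N$; deleting $N$ yields Henneberg frameworks $F'=(G',p')$ and $\hat F'=(G',\hat p')$ that are still equivalent and whose tetrahedra are exactly those of $F,\hat F$ minus $\{i,j,k,N\}$, so $\mathbf V(p')=\mathbf V(\hat p')$. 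The induction hypothesis then supplies a proper rigid motion relating $F'$ and $\hat F'$; applying it to $\hat p$ — which preserves equivalence, all signed volumes, and whether or not $\hat F$ is a reflection of $F$ — I may assume $\hat p_{m}=p_{m}$ for $m=1,\dots,N-1$. Equivalence now forces $\|\hat p_{N}-p_{m}\|=\|p_{N}-p_{m}\|$ for $m\in\{i,j,k\}$, so $\hat p_{N}$ lies on the intersection of three spheres centered at $p_{i},p_{j},p_{k}$; since $F$ is (minimally) rigid these centers are not collinear, so the intersection is at most $\{p_{N},p_{N}^{\ast}\}$, where $p_{N}^{\ast}$ is the mirror image of $p_{N}$ in the plane of $p_{i},p_{j},p_{k}$. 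Reflecting the apex across that plane negates the signed volume of tetrahedron $\{i,j,k,N\}$, so $\hat p_{N}=p_{N}^{\ast}\neq p_{N}$ would make the corresponding components of $\mathbf V(p)$ and $\mathbf V(\hat p)$ opposite and nonzero, contradicting the hypothesis; hence $\hat p_{N}=p_{N}$ and $\hat p=p$, closing the induction.

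The step I expect to be the main obstacle is the treatment of degenerate sub-configurations in the converse: one must argue carefully that the three neighbors of each newly inserted vertex are non-collinear — this is where rigidity of the Henneberg framework is essential, since otherwise rotating the new vertex about the neighbors' common line is a nontrivial flex that preserves all edge lengths and all (zero) signed volumes, and the ``if'' direction genuinely fails — and one must verify that the normalization step (applying the induction-hypothesis rigid motion to $\hat p$) leaves equivalence, the signed volumes, and the reflection status undisturbed. Everything else reduces to routine linear algebra.
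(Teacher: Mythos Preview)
Your argument is correct and takes a genuinely different route from the paper's own proof in Appendix~\ref{Sec:proof:scgt-3d}. Both proofs are inductive along the Henneberg insertion order, but the mechanism in the inductive step differs. The paper introduces the \emph{signed dihedral angle} of a tetrahedron (equations (\ref{eq:cos-dihedral-angle})--(\ref{alpha_s})) and, for each newly inserted vertex, shows via trigonometric identities and the law of cosines that the ``missing'' inter-vertex distance (e.g.\ $\|p_{5}-p_{4}\|$ in the $|\mathcal V|=5$ case) must agree in $F$ and $\hat F$; congruency is built up distance-by-distance, and strong congruency then follows from the matched signed volumes. You instead normalize by the proper rigid motion supplied by the induction hypothesis, reduce the placement of the new vertex to the intersection of three spheres, and let the signed volume of the new tetrahedron select between the two mirror-image candidates. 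Your approach is shorter and more transparently geometric, and it yields the proper rigid motion directly rather than via congruency; the paper's approach, by contrast, produces explicit distance identities that tie in with the later computations in Appendix~\ref{Sec:proof:scgt-ortho-3d}. Your caveat about non-collinearity of the three neighbors is apt: the paper's dihedral-angle argument tacitly relies on the same non-degeneracy (the angles in (\ref{eq:cos-alpha1}) are undefined otherwise), and in the paper this is ultimately guaranteed by the standing assumption that $F^{\ast}$ is non-degenerate with all tetrahedra of nonzero volume.
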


\subsection{Stability Results}

Here, we recall some results concerning the stability of nonlinear systems.

\begin{lemma}
\label{lem:global-ISS}\citep{khalil2015nonlinear} Suppose $f(x,u)$ is
continuously differentiable and globally Lipschitz in $\left[ x,u\right] $.
If $\dot{x}=f(x,0)$ has a globally exponentially stable (GES) equilibrium
point at the origin, then the system $\dot{x}=f\left( x,u\right) $ is
input-to-state stable (ISS).
\end{lemma}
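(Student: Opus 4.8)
The plan is to reduce the claim to a converse-Lyapunov construction followed by the standard ISS-Lyapunov characterization. First I would exploit the hypothesis that $\dot{x}=f(x,0)$ has a GES equilibrium at the origin: because $f(x,0)$ is continuously differentiable and globally Lipschitz, its Jacobian is bounded on all of $\mathbb{R}^{n}$, so the converse theorem for exponential stability \citep{khalil2015nonlinear} supplies a continuously differentiable $V:\mathbb{R}^{n}\to\mathbb{R}$ and positive constants $c_{1},c_{2},c_{3},c_{4}$ with
\begin{equation*}
c_{1}\|x\|^{2}\le V(x)\le c_{2}\|x\|^{2},\qquad \tfrac{\partial V}{\partial x}f(x,0)\le -c_{3}\|x\|^{2},\qquad \Bigl\|\tfrac{\partial V}{\partial x}\Bigr\|\le c_{4}\|x\|,
\end{equation*}
valid for \emph{all} $x\in\mathbb{R}^{n}$. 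The global validity of these three estimates is the load-bearing point, and it is exactly where the global Lipschitz assumption on $f$ is consumed; a merely locally Lipschitz $f$ would only yield these bounds on a ball.

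Next I would differentiate $V$ along the solutions of the forced system $\dot{x}=f(x,u)$. Writing $\dot{V}=\tfrac{\partial V}{\partial x}f(x,0)+\tfrac{\partial V}{\partial x}\bigl[f(x,u)-f(x,0)\bigr]$, I would bound the first term by $-c_{3}\|x\|^{2}$ and, using the gradient estimate together with $\|f(x,u)-f(x,0)\|\le L\|u\|$ (from global Lipschitzness in $u$, with $L$ the Lipschitz constant), bound the second term by $c_{4}L\|x\|\,\|u\|$, so that $\dot{V}\le -c_{3}\|x\|^{2}+c_{4}L\|x\|\,\|u\|$.

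Finally I would use the routine ``absorb a fraction of the negative term'' step: for a fixed $\theta\in(0,1)$, whenever $\|x\|\ge c_{4}L\|u\|/(\theta c_{3})$ one has $\dot{V}\le -(1-\theta)c_{3}\|x\|^{2}$. Combined with the quadratic sandwich bounds on $V$, this shows $V$ is an ISS-Lyapunov function, and ISS of $\dot{x}=f(x,u)$ then follows from the standard ISS-Lyapunov theorem \citep{khalil2015nonlinear}, including the usual class-$\mathcal{K}$/class-$\mathcal{KL}$ bookkeeping that converts the gain threshold into the ISS estimate on $\|x(t)\|$. I expect the only genuine obstacle to be justifying the \emph{global} converse-Lyapunov step; once that is secured, the remaining inequalities are mechanical and the ISS-Lyapunov machinery is off the shelf.
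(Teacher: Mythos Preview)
Your argument is correct and is precisely the standard proof found in the cited reference \citep{khalil2015nonlinear}: the global converse-Lyapunov theorem for exponential stability (enabled by the bounded Jacobian) followed by the ISS-Lyapunov inequality via the split $f(x,u)=f(x,0)+[f(x,u)-f(x,0)]$. Note, however, that the paper does not actually prove this lemma; it merely quotes it as a known background result from Khalil, so there is no ``paper's own proof'' to compare against beyond the textbook derivation you have reproduced.
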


\begin{lemma}
\label{lem:global-interconn}\citep{khalil2015nonlinear} If the systems $\dot{\eta}=f_{1}(\eta ,\xi )$ and $\dot{\xi}=f_{2}(\xi ,u)$ are ISS, then the
cascade connection 
\begin{equation}
\dot{\eta}=f_{1}(\eta ,\xi ),\quad \dot{\xi}=f_{2}(\xi ,u)
\end{equation}%
is ISS. Consequently, if $\dot{\eta}=f_{1}(\eta ,\xi )$ is ISS and the
origin of $\dot{\xi}=f_{2}(\xi ,0)$ is globally asymptotically stable (GAS),
then the origin of the cascade connection 
\begin{equation}
\dot{\eta}=f_{1}(\eta ,\xi ),\quad \dot{\xi}=f_{2}(\xi ,0)
\end{equation}
is GAS.
\end{lemma}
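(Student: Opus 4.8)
The plan is to argue entirely with the comparison-function characterization of ISS. Since $\dot{\xi}=f_{2}(\xi,u)$ is ISS, there exist $\beta_{2}\in\mathcal{KL}$ and $\gamma_{2}\in\mathcal{K}$ such that every solution satisfies $\|\xi(t)\|\le\beta_{2}(\|\xi(0)\|,t)+\gamma_{2}(\|u\|_{[0,t]})$ for all $t\ge0$, where $\|\cdot\|_{[0,t]}$ is the supremum norm on $[0,t]$; similarly, since $\dot{\eta}=f_{1}(\eta,\xi)$ is ISS with $\xi$ as the input, there are $\beta_{1}\in\mathcal{KL}$ and $\gamma_{1}\in\mathcal{K}$ with $\|\eta(t)\|\le\beta_{1}(\|\eta(0)\|,t)+\gamma_{1}(\|\xi\|_{[0,t]})$. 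In particular each subsystem, and hence the cascade, is forward complete for bounded inputs, so the solutions below are defined for all $t\ge0$.

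For the first assertion I would use a causality/time-splitting argument. Fix $t\ge0$ and restart the $\eta$-estimate at time $t/2$, giving $\|\eta(t)\|\le\beta_{1}(\|\eta(t/2)\|,t/2)+\gamma_{1}(\|\xi\|_{[t/2,t]})$. Then bound the two pieces by the subsystem estimates: $\|\eta(t/2)\|$ by its bound from time $0$ (after inserting $\|\xi\|_{[0,t/2]}\le\beta_{2}(\|\xi(0)\|,0)+\gamma_{2}(\|u\|_{[0,t]})$), and $\|\xi\|_{[t/2,t]}\le\beta_{2}(\|\xi(0)\|,t/2)+\gamma_{2}(\|u\|_{[0,t]})$, using that $\beta_{2}$ is nonincreasing in its second argument so that the residual $\xi$-term over $[t/2,t]$ decays like $\beta_{2}(\cdot,t/2)$. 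Substituting and repeatedly applying the elementary facts $\alpha(a+b)\le\alpha(2a)+\alpha(2b)$ for $\alpha\in\mathcal{K}$, $\beta(a+b,s)\le\beta(2a,s)+\beta(2b,s)$ for $\beta\in\mathcal{KL}$, and that $\beta(c,t/2)\to0$ as $t\to\infty$ even for a constant $c$, every resulting term is either a $\mathcal{KL}$ function of $(\|\eta(0)\|+\|\xi(0)\|,\,t)$ or a $\mathcal{K}$ function of $\|u\|_{\infty}$. Adding them (a finite sum of $\mathcal{KL}$, resp. $\mathcal{K}$, functions is again of the same class) and using $\|\xi(0)\|,\|\eta(0)\|\le\|(\eta(0),\xi(0))\|\le\|\eta(0)\|+\|\xi(0)\|$ yields $\|(\eta(t),\xi(t))\|\le\beta(\|(\eta(0),\xi(0))\|,t)+\gamma(\|u\|_{\infty})$ for some $\beta\in\mathcal{KL}$, $\gamma\in\mathcal{K}$, i.e., the cascade is ISS.

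For the second assertion, recall that GAS of the origin of the autonomous system $\dot{\xi}=f_{2}(\xi,0)$ is equivalent to the existence of $\tilde{\beta}_{2}\in\mathcal{KL}$ with $\|\xi(t)\|\le\tilde{\beta}_{2}(\|\xi(0)\|,t)$. Treating this as ``the ISS estimate of $\dot{\xi}=f_{2}(\xi,0)$ with zero input gain'' and repeating verbatim the time-splitting of the previous paragraph (now with no $\|u\|$-terms) produces $\|(\eta(t),\xi(t))\|\le\beta(\|(\eta(0),\xi(0))\|,t)$ for some $\beta\in\mathcal{KL}$, which is precisely (uniform) global asymptotic stability of the origin of the cascade. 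An alternative, Lyapunov-free route avoiding the $\mathcal{KL}$-characterization is: from GAS, $\xi(t)\to0$ and $\xi$ is bounded, so the standard converging-input/converging-state property of ISS systems applied to $\dot{\eta}=f_{1}(\eta,\xi)$ gives $\eta(t)\to0$, establishing global attractivity; the stability of the $\xi$-origin together with the $\eta$-ISS estimate evaluated at $t=0$ gives Lyapunov stability of the cascade origin; and stability plus global attractivity is GAS.

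The step I expect to be the main obstacle is the bookkeeping in the time-splitting argument: verifying that the nested compositions of $\mathcal{KL}$ and $\mathcal{K}$ terms — in particular the outer term $\beta_{1}(\beta_{1}(\|\eta(0)\|,0)+\gamma_{1}(\beta_{2}(\|\xi(0)\|,0)+\gamma_{2}(\|u\|_{\infty})),\,t/2)$ — genuinely collapse, after the splittings above, into a single $\mathcal{KL}$ function of the stacked initial condition plus a single $\mathcal{K}$ function of $\|u\|_{\infty}$. This, together with the standard but nontrivial equivalence between GAS and a $\mathcal{KL}$ decay estimate, carries the proof; the remainder is routine comparison-function algebra.
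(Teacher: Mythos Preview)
Your sketch is the standard comparison-function/time-splitting argument for ISS of cascades and is correct in outline; the bookkeeping you flag is indeed the only genuine work, and the $\mathcal{K}$/$\mathcal{KL}$ splitting identities you cite are exactly what is needed to collapse the nested terms. However, the paper does not prove this lemma at all: it is stated with a citation to Khalil's textbook and used as a black-box result, so there is no ``paper's own proof'' to compare against. Your argument is essentially the proof one finds in that reference.
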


\section{Problem Statement \label{Probl Stat}}

Consider a system of mobile agents described by directed framework $%
F(t)=(G,p(t))$ where $G=(\mathcal{V},\mathcal{E})$, $\left\vert \mathcal{V}%
\right\vert =N$, $\left\vert \mathcal{E}\right\vert =3N-6$, $N\geq 4$, $p=%
\left[ p_{1},\ldots ,p_{N}\right] $, and $p_{i}\in \mathbb{R}^{3}$ is the
position of agent $i$. The directed edge $(j,i)\in \mathcal{E}$ means that
agent $j$ can measure its relative position to agent $i$, $%
p_{ji}:=p_{i}-p_{j}$, but not the opposite. We assume agent $j$ can sense
all relative positions $p_{ji}$ where $i\in \mathcal{N}_{j}(\mathcal{E})$.
The agents are assumed to be governed by the dynamics 
\begin{equation}
\dot{p}_{i}=u_{i},\quad \forall i\in \mathcal{V}  \label{SI model}
\end{equation}%
where $u_{i}\in \mathbb{R}^{3}$ is the control input.

The desired formation is characterized by a set of desired distances $d_{ji}$%
, $(j,i)\in \mathcal{E}$ and a set of desired signed volumes $V_{ijkl}^{\ast
}=V(p_{i}^{\ast },p_{j}^{\ast },p_{k}^{\ast },p_{l}^{\ast })$, $%
(l,i),(l,j),(l,k)\in \mathcal{E}$ (see (\ref{eq:signed-volume})) where $%
p_{i}^{\ast }$ is the desired position of agent $i$.\footnote{%
Note that $p_{i}^{\ast }$ is not explicitly used by the control since we are
not controlling the global position of the agents. This variable is only
mentioned so we can formally define the desired formation.\bigskip} This
gives the desired framework $F^{\ast }=\left( G,p^{\ast }\right) $ where $%
p^{\ast }=\left[ p_{1}^{\ast },\ldots ,p_{N}^{\ast }\right] $ and $%
\left\Vert p_{j}^{\ast }-p_{i}^{\ast }\right\Vert =d_{ji}$. We assume $%
F^{\ast }$ satisfies the following conditions: i) $F^{\ast }$ is
non-degenerate in 3D space, i.e., all tetrahedrons have nonzero volume; ii)
out$(i)=i-1$ for $i=1,2,3$ and out$(i)=3$ for $i=4,\ldots ,N$; and iii)%
\textbf{\ }if there is an edge between agents $i$ and $j$, the direction
must be $i\leftarrow j$ if $i<j$. Given these conditions, we say that agent
1 is the leader, agent 2 is the first follower, agent 3 is the secondary
follower, and agents $i\geq 4$ are ordinary followers. Note that this
nomenclature is the 3D extension of the leader-first-follower, minimally
persistent framework discussed in \cite{summers2011control}, which was
called an acyclic minimally structural persistent framework in \cite%
{lan2018adaptive}.

Our control objective is to design $u_{i}$, $\forall i\in \mathcal{V}$ such
that 
\begin{equation}
F(t)\rightarrow \text{SCgt}^{3}(F^{\ast })\text{ as }t\rightarrow \infty
\label{objective}
\end{equation}%
for the largest set of initial conditions possible.

\section{Orthogonal Basis}

Ambiguous frameworks in 3D can be discerned by employing the signed volume
of the framework in the formation controller. However, this variable may
introduce new undesired equilibria since the distance and volume constraints
will interfere with each other at certain agent positions. In other words,
these two variables do not always constitute an orthogonal space. To remedy
this situation, we will introduce projection variables that are always
orthogonal.

\begin{figure}[tbph]
\centering
\includegraphics[width=0.5\linewidth]{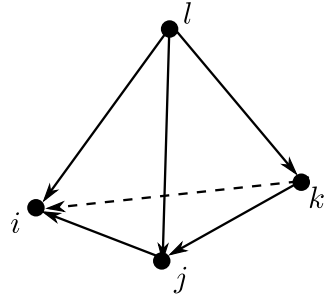}
\caption{Tetrahedron framework.}
\label{fig:3d-framework}
\end{figure}

Consider the tetrahedron in Figure \ref{fig:3d-framework}. If $\left\Vert
p_{ji}\right\Vert \neq 0$ and if there exists a vector $n_{ijk}$ such that 
\begin{equation}
n_{ijk}^{\intercal }p_{ki}=0\quad \text{and}\quad n_{ijk}^{\intercal
}p_{kj}=0,
\end{equation}%
then $p_{ji}^{\intercal }n_{ijk}=0$, $p_{ji}^{\intercal }\left(
n_{ijk}\times p_{ji}\right) =0$, $n_{ijk}^{\intercal }\left( n_{ijk}\times
p_{ji}\right) =0$, and we say $\{p_{ji},n_{ijk}\times p_{ji},n_{ijk}\}$ is
the \textit{orthogonal basis of vertex }$l$. With this in mind, the
projection variables for vertex $l$ are defined as 
\begin{subequations}
\label{eq:3D-OB:projections}
\begin{align}
\zeta _{l}=& p_{li}^{\intercal }p_{ji}=\left\Vert p_{li}\right\Vert
^{2}-p_{li}^{\intercal }p_{lj}  \label{eq:3D-OB:zeta-l} \\
\varphi _{l}=& p_{li}^{\intercal }(n_{ijk}\times p_{ji})=p_{lj}^{\intercal
}(n_{ijk}\times p_{ji})  \label{eq:3D-OB:varphi-l} \\
\vartheta _{l}=& p_{li}^{\intercal }n_{ijk}=p_{lj}^{\intercal
}n_{ijk}=p_{lk}^{\intercal }n_{ijk},  \label{eq:3D-OB:vartheta-l}
\end{align}
\end{subequations}
where 
\begin{equation}
n_{ijk}=\left\{ 
\begin{array}{ll}
p_{ki}\times p_{kj}, & \text{ if }\{i,j,k\}\neq \{1,2,3\} \\ 
&  \\ 
\dfrac{p_{31}\times p_{32}}{\left\Vert p_{31}\times p_{32}\right\Vert }, & 
\text{ if }\{i,j,k\}=\{1,2,3\} \\ 
& \text{ and }\left\Vert p_{31}(t)\times p_{32}(t)\right\Vert \neq 0 \\ 
&  \\ 
n_{123}^{+}, & \text{ if }\{i,j,k\}=\{1,2,3\} \\ 
& \text{ and }\left\Vert p_{31}(0)\times p_{32}(0)\right\Vert =0%
\end{array}%
\right.  \label{eq:n}
\end{equation}%
and $n_{123}^{+}$ is any unit vector satisfying $p_{31}^{\intercal
}n_{123}^{+}=0$ and $p_{32}^{\intercal }n_{123}^{+}=0$. This unit vector can
be computed using Algorithm \ref{alg:pick-n123} below. Notice that (\ref%
{eq:3D-OB:zeta-l}) (resp., (\ref{eq:3D-OB:varphi-l}); (\ref%
{eq:3D-OB:vartheta-l})) is associated with the projection of $p_{li}$ onto
the direction of $p_{ji}$ (resp., $n_{ijk}\times p_{ji}$; $n_{ijk}$). The
reason for differentiating $\{i,j,k\}=\{1,2,3\}$ from the case $%
\{i,j,k\}\neq \{1,2,3\}$ in (\ref{eq:n}) is that our framework is composed
of multiple tetrahedrons where vertices $\{1,2,3\}$ have different
out-degree properties than the others (see Section \ref{Probl Stat}).

\begin{algorithm}
	\begin{algorithmic}[1]
		\State Input: $p_{31} = \left[ x_1, y_1, z_1 \right], p_{32} = \left[ x_2, y_2, z_2 \right]$
		\State Output: $n_{123}$
		
		\State $\text{eps} \gets 1e-3$ 
		
		\If{$\left\Vert p_{31} \right\Vert > \text{eps} $}
		
		\If{$\left\vert z_1 \right\vert > \text{eps} $}
		
		\State $n_z \gets - \left( x_1 + y_1 \right)/z_1$ \;
		\State $n_{t} \gets \left[ 1, 1, n_z\right]$ \;
		
		\ElsIf{$\left\vert y_1 \right\vert > \text{eps} $}
		
		\State $n_y \gets - \left( x_1 + z_1 \right)/y_1$ \;
		\State $n_{t} \gets \left[ 1, n_y, 1\right]$ \;
		
		\ElsIf{$\left\vert x_1 \right\vert > \text{eps} $}
		
		\State $n_x \gets - \left( y_1 + z_1 \right)/x_1$ \;
		\State $n_{t} \gets \left[ n_x, 1, 1\right]$ \;
		\EndIf
		
		\ElsIf{$\left\Vert p_{32} \right\Vert > \text{eps} $}
		
		\If{$\left\vert z_2 \right\vert > \text{eps} $}
		
		\State $n_z \gets - \left( x_2 + y_2 \right)/z_2$ \;
		\State $n_{t} \gets \left[ 1, 1, n_z\right]$ \;
		
		\ElsIf{$\left\vert y_2 \right\vert > \text{eps} $}
		
		\State $n_y \gets - \left( x_2 + z_2 \right)/y_2$ \;
		\State $n_{t} \gets \left[ 1, n_y, 1\right]$ \;
		
		\ElsIf{$\left\vert x_2 \right\vert > \text{eps} $}
		
		\State $n_x \gets - \left( y_2 + z_2 \right)/x_2$ \;
		\State $n_{t} \gets \left[ n_x, 1, 1\right]$ \;
		
		\EndIf
		
		\Else
		\State { $n_t \gets \left[0,0,1\right]$ }
		\EndIf
		
		\State $n_{123} \gets n_t/\left\Vert n_t \right\Vert$
		
		\caption{Selecting $n_{123}$ when $\left\Vert p_{31}(0) \times p_{32}(0) \right\Vert = 0$. (The value $1e-3$ below can be replaced with any sufficiently small number.)}
		\label{alg:pick-n123}
	\end{algorithmic}
\end{algorithm}

Generally, the above projection variables are defined for agents $l\geq 4$
since a tetrahedron is composed of four vertices and only agents $l\geq 4$
have out$(l)=3$. Therefore, special definitions for the projection variables
are required for agents $2$ and $3$. For agent $2$, we define 
\begin{equation}
\zeta _{2}=p_{21}^{\intercal }n_{2},  \label{eq:zeta-2}
\end{equation}%
where 
\begin{equation}
n_{2}=\left\{ 
\begin{array}{ll}
\dfrac{p_{21}}{\left\Vert p_{21}\right\Vert }, & \text{ if }\left\Vert
p_{21}(t)\right\Vert \neq 0 \\ 
&  \\ 
n_{2}^{+}, & \text{ if }\left\Vert p_{21}(0)\right\Vert =0%
\end{array}%
\right.  \label{eq:n2}
\end{equation}%
and $n_{2}^{+}$ is any unit vector. Variables $\varphi _{2}$ and $\vartheta
_{2}$ are undefined since out$(2)=1$. For agent $3$, we let $l=k=3$, $i=1$,
and $j=2$ in (\ref{eq:3D-OB:zeta-l}) and (\ref{eq:3D-OB:varphi-l}) to obtain 
\begin{equation}
\zeta _{3}=p_{31}^{\intercal }p_{21} = \left\Vert p_{31} \right\Vert^2 -
p_{31}^{\intercal} p_{32}  \label{eq:zeta-3}
\end{equation}%
and 
\begin{equation}
\varphi _{3}=p_{31}^{\intercal }\left( n_{123}\times p_{21}\right) .
\label{eq:varphi-3}
\end{equation}%
Here, variable $\vartheta _{3}$ is undefined since out$(3)=2$.

\begin{rmk}
One can show that projection variable (\ref{eq:3D-OB:vartheta-l}) is related
to the signed volume of a tetrahedron and the area of a triangle. Consider (%
\ref{eq:3D-OB:vartheta-l}) when $\{i,j,k\}\neq \{1,2,3\}$. It follows from (%
\ref{eq:n}) that 
\begin{align}
\vartheta _{l}=& p_{li}^{\intercal }n_{ijk}=p_{li}^{\intercal }\left(
p_{ki}\times p_{kj}\right)  \notag \\
=& p_{li}^{\intercal }\left( p_{ki}\times p_{kj}\right) -p_{ki}^{\intercal
}\left( p_{ki}\times p_{kj}\right) =p_{lk}^{\intercal }\left( p_{ki}\times
p_{kj}\right)  \notag \\
=& p_{lk}^{\intercal }\left[ \left( p_{li}-p_{lk}\right) \times \left(
p_{lj}-p_{lk}\right) \right]  \notag \\
=& p_{lk}^{\intercal }\left[ p_{li}\times p_{lj}-p_{li}\times
p_{lk}-p_{lk}\times p_{lj}+p_{lk}\times p_{lk}\right]  \notag \\
=& p_{lk}^{\intercal }\left( p_{li}\times p_{lj}\right) =p_{li}^{\intercal
}\left( p_{lj}\times p_{lk}\right) =-6V_{ijkl}
\label{eq:3D-OB:vartheta-l-volume-not-123}
\end{align}%
where $V_{ijkl}:=V(p_{i},p_{j},p_{k},p_{l})$ from (\ref{eq:signed-volume}).
When $\{i,j,k\}=\{1,2,3\}$ and $\left\Vert p_{31}\times p_{32}\right\Vert
\neq 0$, we have 
\begin{equation}
\vartheta _{l}=p_{l1}^{\intercal }n_{123}=\frac{p_{l1}^{\intercal }\left(
p_{31}\times p_{32}\right) }{\left\Vert p_{31}\times p_{32}\right\Vert }=-%
\frac{3V_{123l}}{\breve{S}_{123}}  \label{eq:3D-OB:vartheta-l-volume-123}
\end{equation}%
where $\breve{S}_{ijk}$ is the regular (unsigned) area of\ $\triangle ijk$
(Heron's formula \citep{zwillinger2002crc}): 
\begin{align}
\breve{S}_{ijk}=& \dfrac{1}{2}\left\Vert p_{ki}\times p_{kj}\right\Vert 
\notag \\
=& \dfrac{1}{4}\left( 2\left\Vert p_{ji}\right\Vert ^{2}\left\Vert
p_{ki}\right\Vert ^{2}+2\left\Vert p_{ji}\right\Vert ^{2}\left\Vert
p_{kj}\right\Vert ^{2}\right.  \notag \\
& \quad \left. +2\left\Vert p_{ki}\right\Vert ^{2}\left\Vert
p_{kj}\right\Vert ^{2}-\left\Vert p_{ji}\right\Vert ^{4}-\left\Vert
p_{ki}\right\Vert ^{4}-\left\Vert p_{kj}\right\Vert ^{4}\right) ^{1/2}.
\label{regular area}
\end{align}
\end{rmk}

The projection variables of a tetrahedralized Henneberg framework $\left( G,
p \right) $ with $\left\vert \mathcal{V}\right\vert =N$ is defined as 
\begin{equation}
\Lambda (p)=\left[ \Lambda _{2},\Lambda _{3},\Lambda _{4}, \ldots ,
\Lambda_{N}\right]
\end{equation}
where 
\begin{equation}
\begin{array}{l}
\Lambda _{2}=\zeta _{2} \\ 
\Lambda _{3}=\left[ \zeta _{3},\varphi _{3}\right] \\ 
\Lambda _{4}=\left[ \zeta _{4},\varphi _{4},\vartheta _{4}\right] \\ 
\multicolumn{1}{c}{\vdots} \\ 
\Lambda _{N}=\left[ \zeta _{N},\varphi _{N},\vartheta _{N}\right] .%
\end{array}
\label{eq:Lambda_i}
\end{equation}

\begin{lemma}
\label{lem:scgt-ortho-3d}3D Henneberg frameworks $F=(G,p)$ and $\hat{F}=(G,%
\hat{p})$ are strongly congruent if and only if $\Lambda (p)=\Lambda (\hat{p}%
)$. (See Appendix \ref{Sec:proof:scgt-ortho-3d} for proof.)
\end{lemma}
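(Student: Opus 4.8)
The plan is to prove the biconditional in two directions, leveraging Lemma~\ref{lem:scgt-3d}, which already establishes that strong congruency is equivalent to equivalence (equal edge lengths) together with $\mathbf{V}(p)=\mathbf{V}(\hat p)$. So it suffices to show that $\Lambda(p)=\Lambda(\hat p)$ holds if and only if $\gamma(p)=\gamma(\hat p)$ and $\mathbf{V}(p)=\mathbf{V}(\hat p)$. For the direction ``$\Lambda$ equal $\Rightarrow$ edge lengths and signed volumes equal,'' I would proceed vertex by vertex in the Henneberg insertion order. For each follower $l\ge 4$ with out-neighbors $i<j<k$, the three projection variables $\zeta_l,\varphi_l,\vartheta_l$ are the coordinates of $p_{li}$ in the (generically) orthogonal basis $\{p_{ji},\,n_{ijk}\times p_{ji},\,n_{ijk}\}$ attached to the already-reconstructed face $\{i,j,k\}$. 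The key observation is that once the sub-framework on $\{1,\dots,l-1\}$ is pinned down up to rigid motion (congruent, same handedness), the basis vectors $p_{ji}$ and $n_{ijk}$ are determined, so equality of $\zeta_l,\varphi_l,\vartheta_l$ forces $p_{li}$ to be the same vector, hence $p_l$ is placed congruently and, via the Remark's identity $\vartheta_l=-6V_{ijkl}$, the signed volume of the new tetrahedron matches. One then reads off $\|p_{li}\|$, $\|p_{lj}\|$, $\|p_{lk}\|$ (the new edge lengths) from $\zeta_l,\varphi_l,\vartheta_l$ together with the already-known lengths on the face; concretely $\|p_{li}\|^2=(\zeta_l^2+\varphi_l^2)/\|p_{ji}\|^2+\vartheta_l^2/\|n_{ijk}\|^2$ after accounting for the normalization of $n_{ijk}$, and similarly for the other two edges. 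Summing over all follower insertions recovers all of $\gamma$ and $\mathbf{V}$.

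The base of the induction is the triangle $\{1,2,3\}$, and this is where the special definitions $\zeta_2=p_{21}^\intercal n_2$, $\zeta_3=p_{31}^\intercal p_{21}$, $\varphi_3=p_{31}^\intercal(n_{123}\times p_{21})$ must be handled with care. From $\zeta_2$ (with $n_2=p_{21}/\|p_{21}\|$ when $p_{21}\ne0$) we get $\|p_{21}\|=|\zeta_2|$, so $d_{21}$ is recovered and $p_2$ is placed congruently relative to $p_1$ (the leftover rotational/translational freedom is exactly the rigid-motion freedom we quotient by). Then $\zeta_3=\|p_{31}\|^2-p_{31}^\intercal p_{32}$ and the length $\|p_{21}\|$ together with $\varphi_3$ determine $\|p_{31}\|$, $\|p_{32}\|$ and the position of $p_3$ in the plane through $p_1,p_2$; the sign of $\varphi_3$ fixes which side of line $12$ the vertex $p_3$ lies on, removing the planar reflection ambiguity. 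The genuinely delicate point is the definition of $n_{123}$ in \eqref{eq:n}: when $\|p_{31}(t)\times p_{32}(t)\|\ne0$ it is the true unit normal to the base plane, but when the three leader vertices are (initially) collinear it is the fixed vector $n_{123}^+$ chosen by Algorithm~\ref{alg:pick-n123}. I would argue that since \emph{both} frameworks $F$ and $\hat F$ use the \emph{same} $n_{123}$ (it is part of the framework data / controller, not recomputed per framework) or, if the collinear case occurs, that equality of the $\Lambda$ coordinates still forces $p_3,\hat p_3$ to coincide because $n_{123}^+$ together with $p_{21}$ and its cross product still span a consistent frame — this is the step I expect to be the main obstacle and where the proof in the appendix presumably does the real work.

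For the converse direction, ``$\gamma(p)=\gamma(\hat p)$ and $\mathbf{V}(p)=\mathbf{V}(\hat p)\Rightarrow\Lambda(p)=\Lambda(\hat p)$,'' I would again induct in insertion order. Equivalence and matching signed volumes imply, by Lemma~\ref{lem:scgt-3d}, that $F$ and $\hat F$ are strongly congruent, i.e. $\hat p=\mathcal{T}(p)$ for some $\mathcal{T}(x)=\mathcal{R}x+d$ with $\mathcal{R}\in SO(3)$. Each projection variable in \eqref{eq:3D-OB:projections}, \eqref{eq:zeta-2}, \eqref{eq:zeta-3}, \eqref{eq:varphi-3} is built from inner products and cross products of relative position vectors $p_{li},p_{ji}$ etc., and one checks that all of these are invariant under a common rotation $\mathcal{R}$ and translation $d$: inner products are $SO(3)$-invariant, and the cross product satisfies $(\mathcal{R}a)\times(\mathcal{R}b)=\mathcal{R}(a\times b)$ precisely because $\det\mathcal{R}=+1$, so $n_{ijk}$ transforms as $\mathcal{R}n_{ijk}$ (in the non-collinear normalized case the normalization is preserved) and hence $\vartheta_l=p_{li}^\intercal n_{ijk}$ is invariant. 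The one subtlety is again the frozen vector $n_{123}^+$ in the collinear initial case: there the invariance must be checked using the fact that strong congruency rules out reflections, so $\mathcal{R}\in SO(3)$ rather than $O(3)$, and that $n_{123}^+$ for the two frameworks are related by the same $\mathcal{R}$ (or chosen consistently). Once every scalar in $\Lambda$ is shown to be invariant under the rigid motion relating $F$ to $\hat F$, we conclude $\Lambda(p)=\Lambda(\hat p)$, completing the proof. I expect the bookkeeping for agents $2$ and $3$ and the degenerate-normal case to be the only places requiring genuine care; everything else is a routine invariance check plus the inductive reconstruction argument.
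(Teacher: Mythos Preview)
Your proposal is essentially correct, and for the direction ``$\Lambda(p)=\Lambda(\hat p)\Rightarrow$ strongly congruent'' it follows the same inductive, vertex-by-vertex structure as the paper's proof: recover the edge lengths (and signed volumes) attached to each newly inserted vertex $l$ from $\zeta_l,\varphi_l,\vartheta_l$ together with the already-known face data, then invoke Lemma~\ref{lem:scgt-3d}. The paper carries this out purely algebraically---writing $\zeta_l,\varphi_l,\vartheta_l$ as explicit functions of the inter-agent distances via the law of cosines, Heron's formula, and the Cayley--Menger determinant, and then solving for $\|p_{li}\|,\|p_{lj}\|,\|p_{lk}\|$---whereas you phrase it geometrically as reconstructing $p_{li}$ from its coordinates in the orthogonal frame. (One small slip: your formula for $\|p_{li}\|^2$ should read $\zeta_l^2/\|p_{ji}\|^2+\varphi_l^2/(\|n_{ijk}\|^2\|p_{ji}\|^2)+\vartheta_l^2/\|n_{ijk}\|^2$, since $\|n_{ijk}\times p_{ji}\|=\|n_{ijk}\|\,\|p_{ji}\|$; this only collapses to your expression when $\{i,j,k\}=\{1,2,3\}$ and $n_{123}$ is a unit vector.)

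For the converse direction ``strongly congruent $\Rightarrow\Lambda(p)=\Lambda(\hat p)$'' you take a genuinely different route. The paper again works computationally: it shows that each $\zeta_l,\varphi_l,\vartheta_l$ can be written as an explicit function of the inter-agent distances and the sign of the tetrahedron volume (e.g.\ $\varphi_3=2\breve S_{123}$, $\vartheta_4=-3V_{1234}/\breve S_{123}$, the general $\varphi_l$ expanded via $(a\times b)^\intercal(c\times d)=(a^\intercal c)(b^\intercal d)-(a^\intercal d)(b^\intercal c)$ and the law of cosines), so equal distances and equal signed volumes force equal $\Lambda$. Your argument is cleaner: strong congruency means $\hat p=\mathcal R p+d$ with $\mathcal R\in SO(3)$, relative positions transform as $\hat p_{ji}=\mathcal R p_{ji}$, cross products satisfy $(\mathcal R a)\times(\mathcal R b)=\mathcal R(a\times b)$ because $\det\mathcal R=+1$, and hence every inner product appearing in \eqref{eq:3D-OB:projections}--\eqref{eq:varphi-3} is invariant. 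This buys you a one-line conceptual proof in place of several pages of algebra; what the paper's explicit formulas buy in return is the concrete expression of $\Lambda^\ast$ in terms of the desired distances, which it needs anyway in Appendix~\ref{Sec:desired-projection-variables-3d}. As for the degenerate case $n_{123}^+$: you are right that Algorithm~\ref{alg:pick-n123} is coordinate-dependent and need not commute with $\mathcal R$, but the paper's proof also tacitly works in the generic case (e.g.\ it uses $\varphi_3=2\breve S_{123}$, which presumes $\|p_{31}\times p_{32}\|\neq 0$), so this is not a gap unique to your approach.
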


Note that by Lemma \ref{lem:scgt-ortho-3d}, the control objective in (\ref%
{objective}) is equivalent to 
\begin{equation}
\Lambda (p(t))\rightarrow \Lambda (p^{\ast })\text{ as }t\rightarrow \infty .
\label{3d-obj}
\end{equation}

\section{Formation Controller \label{Section:OBA:3DForm}}

\subsection{Error Variables}

The control objective will be quantified by the following three \textit{%
projection error} variables 
\begin{subequations}
\label{errors}
\begin{eqnarray}
\tilde{\zeta}_{l} &=&\zeta _{l}-\zeta _{l}^{\ast }  \label{eq:zeta-error-3d}
\\
\tilde{\varphi}_{l} &=&\varphi _{l}-\varphi _{l}^{\ast }
\label{eq:varphi-error-3d} \\
\tilde{\vartheta}_{l} &=&\vartheta _{l}-\vartheta _{l}^{\ast }
\label{eq:vartheta-error-3d}
\end{eqnarray}
\end{subequations}
where the asterisk denotes the desired value for the projection. Since $%
F^{\ast }$ is typically specified in terms of the desired inter-agent
distances, the desired projections can be calculated in terms of $d_{ji}$, $%
(j,i)\in \mathcal{E}$ as shown in Appendix \ref%
{Sec:desired-projection-variables-3d}. The stacked vector of all the
projection errors is represented by $\tilde{\Lambda}=\left[ \tilde{\Lambda}%
_{2},\tilde{\Lambda}_{3},\ldots ,\tilde{\Lambda}_{N}\right] $ where $\tilde{%
\Lambda}_{2}=\tilde{\zeta}_{2}$, $\tilde{\Lambda}_{3}=\left[ \tilde{\zeta}%
_{3},\tilde{\varphi}_{3}\right] $, and $\tilde{\Lambda}_{i}=\left[ \tilde{%
\zeta}_{i},\tilde{\varphi}_{i},\tilde{\vartheta}_{i}\right] $ for $%
i=4,\ldots ,N$.

\begin{lemma}
\label{lem:3d-error-variables} For agent $l$, the planes corresponding to $%
\tilde{\zeta}_{l}=0$, $\tilde{\varphi}_{l}=0$, and $\tilde{\vartheta}_{l}=0$
are mutually orthogonal if $\left\Vert p_{ji}\right\Vert =d_{ji}$, $%
\left\Vert p_{ki}\right\Vert =d_{ki}$, and $\left\Vert p_{kj}\right\Vert
=d_{kj}$, where $i<j<k<l$ and $(l,i),(l,j),(l,k)\in \mathcal{E}$. (See
Appendix \ref{Sec:proof:error-variables} for proof.)
\end{lemma}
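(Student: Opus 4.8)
The plan is to show that, under the stated distance conditions, the three vectors defining the orthogonal basis of vertex $l$—namely $p_{ji}$, $n_{ijk}\times p_{ji}$, and $n_{ijk}$—are mutually orthogonal and nonzero, so that the level sets $\tilde\zeta_l=0$, $\tilde\varphi_l=0$, $\tilde\vartheta_l=0$ are planes whose normals are these three vectors. First I would note that $\tilde\zeta_l = \zeta_l-\zeta_l^\ast = p_{li}^\intercal p_{ji} - \zeta_l^\ast$ is an affine function of $p_l$ with gradient (with respect to $p_l$) equal to $p_{ji}$; similarly $\nabla_{p_l}\tilde\varphi_l = n_{ijk}\times p_{ji}$ and $\nabla_{p_l}\tilde\vartheta_l = n_{ijk}$ from \eqref{eq:3D-OB:projections}. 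Hence the three error planes (viewed in the coordinates of $p_l$) are orthogonal precisely when these three gradient vectors are pairwise orthogonal, which is the content of the ``orthogonal basis of vertex $l$'' construction already introduced before \eqref{eq:3D-OB:projections}.

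Next I would verify the two ingredients that make that basis valid: (i) $n_{ijk}$ is indeed orthogonal to both $p_{ki}$ and $p_{kj}$, and hence to $p_{ji}=p_{ki}-p_{kj}$; and (ii) none of the three vectors vanishes. For $\{i,j,k\}\neq\{1,2,3\}$ we have $n_{ijk}=p_{ki}\times p_{kj}$, which by the elementary cross-product identities satisfies $n_{ijk}^\intercal p_{ki}=n_{ijk}^\intercal p_{kj}=0$ automatically; the normalized and the $n_{123}^+$ cases in \eqref{eq:n} are defined to have the same property. Then $p_{ji}^\intercal n_{ijk}=0$, and $p_{ji}^\intercal(n_{ijk}\times p_{ji})=0$ and $n_{ijk}^\intercal(n_{ijk}\times p_{ji})=0$ hold for any vectors, giving pairwise orthogonality. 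The nonvanishing is where the distance hypotheses enter: $\|p_{ji}\|=d_{ji}>0$ since $F^\ast$ is non-degenerate, $\|n_{ijk}\|=\|p_{ki}\times p_{kj}\|=2\breve S_{ijk}$ which by \eqref{regular area} is a fixed positive constant once $\|p_{ji}\|=d_{ji}$, $\|p_{ki}\|=d_{ki}$, $\|p_{kj}\|=d_{kj}$ (non-degeneracy rules out the three being collinear), and $\|n_{ijk}\times p_{ji}\|=\|n_{ijk}\|\,\|p_{ji}\|$ because $n_{ijk}\perp p_{ji}$; all three are therefore nonzero, so each error equation genuinely defines a plane with a well-defined normal direction.

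The main obstacle—really the only subtlety—is bookkeeping around the special case $\{i,j,k\}=\{1,2,3\}$ and the definition \eqref{eq:n} of $n_{ijk}$: one must check that even in the branches where $n_{ijk}$ is the normalized cross product or the fixed vector $n_{123}^+$ from Algorithm \ref{alg:pick-n123}, the orthogonality relations $n_{ijk}^\intercal p_{ki}=n_{ijk}^\intercal p_{kj}=0$ still hold (by construction $n_{123}^+$ satisfies $p_{31}^\intercal n_{123}^+=p_{32}^\intercal n_{123}^+=0$), and that normalization does not affect the direction of the plane normal. Since the lemma only asserts mutual orthogonality of the \emph{planes} (not of unit normals), scaling $n_{ijk}$ is harmless. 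I would then conclude by assembling these observations: under the three distance equalities the gradients $p_{ji}$, $n_{ijk}\times p_{ji}$, $n_{ijk}$ form an orthogonal triad of nonzero vectors, so the planes $\tilde\zeta_l=0$, $\tilde\varphi_l=0$, $\tilde\vartheta_l=0$ are mutually orthogonal, which is exactly the claim. The full details of the cross-product manipulations and the Heron-formula positivity argument are routine and would be deferred to Appendix \ref{Sec:proof:error-variables}.
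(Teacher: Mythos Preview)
Your argument is correct, but it differs from the paper's proof in Appendix~\ref{Sec:proof:error-variables}. The paper fixes an explicit coordinate frame aligned with the triangle $ijk$ (placing $p_i=[-d_{ji}/2,0,0]$, $p_j=[d_{ji}/2,0,0]$, $p_k=[x_k,y_k,0]$) and then computes directly that $\tilde\zeta_l=0$, $\tilde\varphi_l=0$, $\tilde\vartheta_l=0$ reduce to $x_l=\text{const}$, $y_l=\text{const}$, $z_l=\text{const}$ respectively, so the three planes are the coordinate planes and hence orthogonal. You instead observe, coordinate-free, that each error is affine in $p_l$ with gradient (up to an immaterial sign) equal to one of the basis vectors $p_{ji}$, $n_{ijk}\times p_{ji}$, $n_{ijk}$, and then invoke the already-established pairwise orthogonality and nonvanishing of that triad. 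Your route is shorter and avoids choosing coordinates; the paper's route, while more computational, yields the explicit plane locations (e.g., $x_l=(d_{li}^2-d_{lj}^2)/(2d_{ji})$) that feed the geometric picture in Figure~\ref{fig:ortho-3D-errors}. Both are valid; the choice is between brevity and explicitness.
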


\subsection{Control Law}

We propose the following formation control 
\begin{subequations}
\label{ctrl:3D-OB:SI}
\begin{align}
u_{1}=& \ 0  \label{ctrl:3D-OB:SI-1} \\
u_{2}=& \ \mu _{2}\tilde{\zeta}_{2} n_{2}  \label{ctrl:3D-OB:SI-2} \\
u_{3}=& \ \mu _{3}\tilde{\zeta}_{3}\left( p_{31}-p_{32}\right) + \nu _{3} 
\tilde{\varphi}_{3}n_{123}\times \left( p_{31}-p_{32}\right)
\label{ctrl:3D-OB:SI-3} \\
u_{l}=& \ \mu _{l}\tilde{\zeta}_{l}\left( p_{li}-p_{lj}\right) + \nu _{l} 
\tilde{\varphi}_{l}n_{ijk}\times \left( p_{li}-p_{lj}\right)  \notag \\
& + \lambda _{l}\tilde{\vartheta}_{l}n_{ijk},\quad l = 4, \ldots, N
\label{ctrl:3D-OB:SI-l}
\end{align}
\end{subequations}
where $i<j<k<l$, $(l,i),(l,j),(l,k)\in \mathcal{E}$, and $%
\mu_{l},\nu_{l},\lambda _{l}$ are positive control gains.

Note that since $p_{ki}=p_{li}-p_{lk}$ and $p_{kj}=p_{lj}-p_{lk}$ (see
Figure \ref{fig:3d-framework}), the term $n_{ijk}$ defined in (\ref{eq:n})
can be expressed as a function of $p_{li}$, $p_{lj}$, and $p_{lk}$ for $%
(l,i),(l,j),(l,k)\in \mathcal{E}$. Therefore, the above control is
decentralized in the sense that it requires each agent to know its relative
position to neighboring agents only. This means that control (\ref%
{ctrl:3D-OB:SI}) can be implemented in each agent's local coordinate frame.

\begin{rmk}
Since a 3D minimally persistent graph has $3N-6$ edges, a 3D formation
requires $3N-6$ constraints to maintain its shape. Previous 3D formation
control work that utilized distance and volume variables \citep{ferreira2016adaptive} are over-constrained since they require $3N-6$
distance constraints and $N-3$ volume constraints. Although these extra
constraints rule out formation ambiguities, they introduce new undesired
equilibria. On the other hand, the orthogonal basis method proposed here has 
$N-1$ $\zeta $-type projection constraints, $N-2$ $\varphi $-type projection
constraints, and $N-3$ $\vartheta $-type projection constraints, totalling
exactly $3N-6$ constraints.
\end{rmk}

\subsection{Preliminary Analysis}

In preparation for our main results, we present next some preliminary
results.

\begin{lemma}
\label{lem:ndot} With control (\ref{ctrl:3D-OB:SI}), $\dot{n}_{2}=0$ and $%
\dot{n}_{123}=0$.
\end{lemma}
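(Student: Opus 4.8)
The plan is to treat $n_{2}$ and $n_{123}$ separately, and in each case to dispatch the constant branch of the respective definition trivially and then concentrate on the normalized branch. When $\|p_{21}(0)\|=0$ (resp.\ $\|p_{31}(0)\times p_{32}(0)\|=0$), definition (\ref{eq:n2}) (resp.\ (\ref{eq:n})) sets $n_{2}=n_{2}^{+}$ (resp.\ $n_{123}=n_{123}^{+}$), a fixed vector, so $\dot n_{2}=0$ (resp.\ $\dot n_{123}=0$) at once. It remains to handle the normalized branches.

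For $n_{2}$: whenever $\|p_{21}(t)\|\neq 0$ we have $n_{2}=p_{21}/\|p_{21}\|$. Since $u_{1}=0$ by (\ref{ctrl:3D-OB:SI-1}) and $p_{21}=p_{1}-p_{2}$, we get $\dot p_{21}=-u_{2}=-\mu_{2}\tilde\zeta_{2}\,n_{2}$, which is a scalar multiple of $p_{21}$. Differentiating the normalization, $\dot n_{2}=\dot p_{21}/\|p_{21}\|-p_{21}\,(p_{21}^{\intercal}\dot p_{21})/\|p_{21}\|^{3}$, and this vanishes because $\dot p_{21}$ is a scalar multiple of $p_{21}$. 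To confirm the normalized branch remains active for all $t\ge 0$, observe that along it $\zeta_{2}=p_{21}^{\intercal}n_{2}=\|p_{21}\|$, so $\tfrac{d}{dt}\|p_{21}\|=p_{21}^{\intercal}\dot p_{21}/\|p_{21}\|=-\mu_{2}\tilde\zeta_{2}=-\mu_{2}(\|p_{21}\|-d_{21})$; since $d_{21}>0$ this scalar ODE keeps $\|p_{21}\|$ strictly positive.

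For $n_{123}$: along the branch $\|p_{31}\times p_{32}\|\neq 0$, put $w:=p_{31}\times p_{32}$, so $n_{123}=w/\|w\|$ and $\dot n_{123}=\|w\|^{-1}(I-n_{123}n_{123}^{\intercal})\dot w$; it therefore suffices to show $\dot w\parallel w$. The crux is that $\dot p_{31}=-u_{3}$ and $\dot p_{32}=u_{2}-u_{3}$ both lie in $\mathrm{span}\{p_{31},p_{32}\}$: indeed $p_{31}-p_{32}=p_{21}$ lies in that plane; $n_{123}\times(p_{31}-p_{32})$ lies in it because it is orthogonal to the plane normal $n_{123}$; and $n_{2}=p_{21}/\|p_{21}\|$ is the active branch here, since $p_{31}(0)\times p_{32}(0)\neq 0$ forces $p_{31}(0)\neq p_{32}(0)$, i.e.\ $p_{21}(0)\neq 0$, and $n_{2}$ is a multiple of $p_{21}$. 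Hence every term of $u_{2}$ in (\ref{ctrl:3D-OB:SI-2}) and of $u_{3}$ in (\ref{ctrl:3D-OB:SI-3}) lies in $\mathrm{span}\{p_{31},p_{32}\}$. Writing $\dot p_{31}=\alpha p_{31}+\beta p_{32}$ and $\dot p_{32}=\gamma p_{31}+\delta p_{32}$, we obtain $\dot w=\dot p_{31}\times p_{32}+p_{31}\times\dot p_{32}=(\alpha+\delta)\,p_{31}\times p_{32}=(\alpha+\delta)w$, so $(I-n_{123}n_{123}^{\intercal})\dot w=0$ and $\dot n_{123}=0$. As before, $\dot w$ being a scalar multiple of $w$ makes $\|w(t)\|$ obey a scalar linear ODE, so it cannot vanish on the interval of existence and the branch stays consistent.

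The only genuine obstacle is the bookkeeping around degeneracies: one must check that the normalized branch of each of (\ref{eq:n2}),(\ref{eq:n}) is exactly the one active under its initial-time non-degeneracy condition, that this non-degeneracy is preserved by the flow (the two scalar ODEs above), and that in the $n_{123}$ argument the relevant $n_{2}$ is indeed its normalized form (which holds because $p_{31}\times p_{32}\neq 0$ implies $p_{21}\neq 0$). Everything else reduces to the elementary derivative of the map $v\mapsto v/\|v\|$ together with the algebraic observation that the controls $u_{2}$ and $u_{3}$ never push $p_{31}$ and $p_{32}$ out of their common plane.
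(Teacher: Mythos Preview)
Your proof is correct. For $n_{2}$ your argument is essentially the paper's: differentiate the normalization and use that $\dot p_{21}=-u_{2}$ is a scalar multiple of $p_{21}$. For $n_{123}$, however, you take a genuinely different and cleaner route. The paper substitutes (\ref{ctrl:3D-OB:SI}) directly into $\frac{d}{dt}(p_{31}\times p_{32})$, expands, and then verifies by brute force that the resulting expression (equation (\ref{eq:p31xp32-2})) is a scalar multiple of $p_{31}\times p_{32}$ so that the two terms of $\dot n_{123}$ cancel. You instead make the geometric observation that every term of $u_{2}$ and $u_{3}$ lies in $\mathrm{span}\{p_{31},p_{32}\}$ (since $p_{21}=p_{31}-p_{32}$ and $n_{123}\times p_{21}\perp n_{123}$), whence $\dot p_{31},\dot p_{32}\in\mathrm{span}\{p_{31},p_{32}\}$ and $\dot w\parallel w$ follows at once from the bilinearity of the cross product. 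This isolates the structural reason the lemma holds and avoids the line-by-line cancellation the paper performs. You also supply the forward-invariance arguments (the scalar ODEs for $\|p_{21}\|$ and $\|w\|$) that guarantee the normalized branches remain active; the paper omits these and relies on them only implicitly via the remark following the lemma.
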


\begin{proof}
If $\left\Vert p_{21}\right\Vert \neq 0$, the time derivative of (\ref{eq:n2}) is given by 
\begin{align}
\dot{n}_{2}=& \dfrac{1}{\left\Vert p_{21}\right\Vert }\left( \dot{p}_{1}-%
\dot{p}_{2}\right) -\dfrac{p_{21}}{\left\Vert p_{21}\right\Vert ^{2}}\dfrac{d%
}{dt}\left\Vert p_{21}\right\Vert  \notag \\
=& \dfrac{1}{\left\Vert p_{21}\right\Vert }\left( \dot{p}_{1}-\dot{p}%
_{2}\right) -\dfrac{p_{21}}{\left\Vert p_{21}\right\Vert ^{2}}\dfrac{%
p_{21}^{\intercal }\left( \dot{p}_{1}-\dot{p}_{2}\right) }{\left\Vert
p_{21}\right\Vert }  \notag \\
=& -\dfrac{1}{\left\Vert p_{21}\right\Vert }u_{2}+\dfrac{p_{21}}{\left\Vert
p_{21}\right\Vert ^{3}}p_{21}^{\intercal }u_{2}.  \label{eq:n2-dot-1}
\end{align}%
After substituting (\ref{ctrl:3D-OB:SI}) into (\ref{eq:n2-dot-1}), we obtain 
\begin{align}
\dot{n}_{2}=& -\dfrac{1}{\left\Vert p_{21}\right\Vert }\left( \mu _{2}\tilde{%
\zeta}_{2}\dfrac{p_{21}}{\left\Vert p_{21}\right\Vert }\right) +\dfrac{p_{21}%
}{\left\Vert p_{21}\right\Vert ^{3}}p_{21}^{\intercal }\left( \mu _{2}\tilde{%
\zeta}_{2}\dfrac{p_{21}}{\left\Vert p_{21}\right\Vert }\right)  \notag \\
=& -\mu _{2}\tilde{\zeta}_{2}\dfrac{p_{21}}{\left\Vert p_{21}\right\Vert ^{2}%
}+\mu _{2}\tilde{\zeta}_{2}\dfrac{p_{21}}{\left\Vert p_{21}\right\Vert ^{2}}
\notag \\
=& \ 0.
\end{align}%
When $\left\Vert p_{21}\right\Vert =0$, it is obvious from (\ref{eq:n2})
that $\dot{n}_{2}=0$.

If $\left\Vert p_{31}\times p_{32}\right\Vert \neq 0$, the derivative of (%
\ref{eq:n}) is given by 
\begin{align}
\dot{n}_{123}=& \frac{1}{\left\Vert p_{31}\times p_{32}\right\Vert }\frac{d}{%
dt}\left( p_{31}\times p_{32}\right)  \notag \\
& -\left( p_{31}\times p_{32}\right) \dfrac{\left( p_{31}\times
p_{32}\right) ^{\intercal }}{\left\Vert p_{31}\times p_{32}\right\Vert ^{3}}%
\dfrac{d}{dt}\left( p_{31}\times p_{32}\right)  \label{eq:3D-OB:n123-dot}
\end{align}%
where 
\begin{equation}
\dfrac{d}{dt}\left( p_{31}\times p_{32}\right) =\left( u_{1}-u_{3}\right)
\times p_{32}+p_{31}\times \left( u_{2}-u_{3}\right) .
\label{eq:3D-OB:p31xp32-dot}
\end{equation}

Substituting (\ref{ctrl:3D-OB:SI}) into (\ref{eq:3D-OB:p31xp32-dot}) yields 
\begin{eqnarray}
&&\dfrac{d}{dt}\left( p_{31}\times p_{32}\right)  \notag \\
&=&-\left( p_{31}-p_{32}\right) \times \left( \mu _{3}\tilde{\zeta}%
_{3}p_{21}+\nu _{3}\tilde{\varphi}_{3}n_{123}\times p_{21}\right)  \notag \\
&&+p_{31}\times \left( \mu _{2}\tilde{\zeta}_{2}n_{2}\right)  \notag \\
&=&-\nu _{3}\tilde{\varphi}_{3}\left\Vert p_{21}\right\Vert ^{2}n_{123}+\mu
_{2}\tilde{\zeta}_{2}p_{31}\times n_{2}.  \label{eq:p31xp32}
\end{eqnarray}

Since $\left\Vert p_{31}\times p_{32}\right\Vert \neq 0$, we have 
\begin{equation}
\left\Vert p_{31}\times p_{32}\right\Vert =\left\Vert p_{31}\times \left(
p_{31}-p_{21}\right) \right\Vert =\left\Vert p_{31}\times p_{21}\right\Vert
\neq 0,  \notag
\end{equation}%
which implies $\left\Vert p_{21}\right\Vert \neq 0$. Then, we can substitute
the first case of (\ref{eq:n2}) and the second case of (\ref{eq:n}) into (%
\ref{eq:p31xp32}) to obtain 
\begin{align}
& \dfrac{d}{dt}\left( p_{31}\times p_{32}\right)  \notag \\
=& -\nu _{3}\tilde{\varphi}_{3}\left\Vert p_{21}\right\Vert ^{2}\dfrac{%
p_{31}\times p_{32}}{\left\Vert p_{31}\times p_{32}\right\Vert }+\mu _{2}%
\tilde{\zeta}_{2}p_{31}\times \dfrac{p_{21}}{\left\Vert p_{21}\right\Vert } 
\notag \\
=& -\nu _{3}\tilde{\varphi}_{3}\left\Vert p_{21}\right\Vert ^{2}\dfrac{%
p_{31}\times p_{32}}{\left\Vert p_{31}\times p_{32}\right\Vert }-\mu _{2}%
\tilde{\zeta}_{2}\dfrac{p_{31}\times p_{32}}{\left\Vert p_{21}\right\Vert } .
\label{eq:p31xp32-2}
\end{align}

Now, substituting (\ref{eq:p31xp32-2}) into (\ref{eq:3D-OB:n123-dot}) gives 
\begin{align}
\dot{n}_{123}=& \dfrac{1}{\left\Vert p_{31}\times p_{32}\right\Vert }  \notag
\\
& \cdot \left( -\nu _{3}\tilde{\varphi}_{3}\left\Vert p_{21}\right\Vert ^{2}%
\dfrac{p_{31}\times p_{32}}{\left\Vert p_{31}\times p_{32}\right\Vert }-\mu
_{2}\tilde{\zeta}_{2}\dfrac{p_{31}\times p_{32}}{\left\Vert
p_{21}\right\Vert }\right)  \notag \\
& -\left( p_{31}\times p_{32}\right) \cdot \dfrac{\left( p_{31}\times
p_{32}\right) ^{\intercal }}{\left\Vert p_{31}\times p_{32}\right\Vert ^{3}}%
\left( -\nu _{3}\tilde{\varphi}_{3}\left\Vert p_{21}\right\Vert ^{2}\right. 
\notag \\
& \left. \cdot \dfrac{p_{31}\times p_{32}}{\left\Vert p_{31}\times
p_{32}\right\Vert }-\mu _{2}\tilde{\zeta}_{2}\dfrac{p_{31}\times p_{32}}{%
\left\Vert p_{21}\right\Vert }\right)  \notag \\
=& \ 0.  \label{eq:3D-OB:n123-dot-equal-0}
\end{align}

For $p_{31}\times p_{32}=0$, it is clear from (\ref{eq:n}) that $\dot{n}%
_{123}=0$. 
\end{proof}

\begin{rmk}
The purpose of $n_{2}^{+}$ in (\ref{eq:n2}) is to force agent $2$ to leave
the collocated initial position with agent $1$. Since $\dot{n}_{2}=0$, the
two cases of (\ref{eq:n2}) have the same value. For example, assume the two
agents are collocated at time zero and let $n_{2}^{+}=\left[ 1,0,0\right] $.
Control (\ref{ctrl:3D-OB:SI-2}) will cause agent $2$ to move away from the
collocated position along vector $\left[ 1,0,0\right] $. Since the agents
are no longer collocated, then $\dfrac{p_{21}}{\left\Vert p_{21}\right\Vert }%
=\left[ 1,0,0\right] $. The vector $n_{123}^{+}$ in (\ref{eq:n}) serves a
similar purpose, viz., to force agent $3$ to leave the collinear initial
position with agents $1$ and $2$. Note that the second and third cases of (%
\ref{eq:n}) have the same value due to $\dot{n}_{123}=0$. If the three
agents are collinear at time zero on the $x$-$y$ plane and $%
n_{123}^{+}=[0,0,1]$ for example, then the second term in (\ref%
{ctrl:3D-OB:SI-3}) will cause agent $3$ to move on the $x$-$y$ plane away
from the collinear position.
\end{rmk}

\begin{lemma}
\label{lem:3D:2-agent-system} For the two-agent system $\{1,2\}$, (\ref%
{ctrl:3D-OB:SI-1}) and (\ref{ctrl:3D-OB:SI-2}) render $\tilde{\zeta}_{2}=0$
GES.
\end{lemma}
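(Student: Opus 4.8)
The plan is to reduce the error dynamics to a scalar linear ODE. First I would use (\ref{ctrl:3D-OB:SI-1}) to note that agent $1$ is stationary, so $\dot p_{21} = \dot p_1 - \dot p_2 = u_1 - u_2 = -u_2$. Next, by Lemma \ref{lem:ndot} we have $\dot n_2 = 0$, so $n_2$ is a fixed vector in time; moreover, in both branches of (\ref{eq:n2}) it satisfies $\left\Vert n_2\right\Vert = 1$ (in the first case because it is a normalized vector, in the second by the stipulation that $n_2^{+}$ is a unit vector). Consequently $n_2$ may be pulled outside the time derivative of $\zeta_2 = p_{21}^{\intercal} n_2$ from (\ref{eq:zeta-2}).

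Carrying this out, $\dot\zeta_2 = n_2^{\intercal}\dot p_{21} = -n_2^{\intercal} u_2 = -\mu_2 \tilde\zeta_2\, n_2^{\intercal} n_2 = -\mu_2 \tilde\zeta_2$, after substituting (\ref{ctrl:3D-OB:SI-2}). Since $\zeta_2^{\ast}$ is a constant, the projection error obeys $\dot{\tilde\zeta}_2 = \dot\zeta_2 = -\mu_2\tilde\zeta_2$, whose solution is $\tilde\zeta_2(t) = e^{-\mu_2 t}\tilde\zeta_2(0)$. Because $\mu_2 > 0$, this shows the origin $\tilde\zeta_2 = 0$ is GES (globally, since the argument holds for every initial condition, including the collocated case $\left\Vert p_{21}(0)\right\Vert = 0$, where the constant unit vector $n_2 = n_2^{+}$ still makes the computation valid).

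There is essentially no serious obstacle here; the only point requiring care is the justification that $n_2$ is both time-invariant and of unit norm so that it factors cleanly out of $\dot\zeta_2$ and so that $n_2^{\intercal} n_2 = 1$ — and both facts are already in hand from Lemma \ref{lem:ndot} and definition (\ref{eq:n2}). I would therefore expect the proof to be one short display chain plus a sentence invoking these two facts.
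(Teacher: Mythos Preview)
Your proposal is correct and matches the paper's proof essentially line for line: both differentiate $\tilde{\zeta}_2$, invoke Lemma~\ref{lem:ndot} to drop the $\dot n_2$ term, substitute (\ref{ctrl:3D-OB:SI-1})--(\ref{ctrl:3D-OB:SI-2}), and use $\left\Vert n_2\right\Vert=1$ to arrive at $\dot{\tilde{\zeta}}_2=-\mu_2\tilde{\zeta}_2$. Your added remarks on the explicit solution and the collocated initial condition are fine elaborations but not needed for the argument.
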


\begin{proof}
The time derivative of $\tilde{\zeta}_{2}$ is given by 
\begin{equation}
\dot{\tilde{\zeta}}_{2}=\left( \dot{p}_{1}-\dot{p}_{2}\right) ^{\intercal
}n_{2}+p_{21}^{\intercal }\dot{n}_{2}.  \label{eq:agent2-zeta-error-dot}
\end{equation}%
Applying Lemma \ref{lem:ndot} and substituting (\ref{ctrl:3D-OB:SI-1}) and (%
\ref{ctrl:3D-OB:SI-2}) into (\ref{eq:agent2-zeta-error-dot}), we obtain 
\begin{equation}
\dot{\tilde{\zeta}}_{2}=-n_{2}^{\intercal }u_{2}=-\mu _{2}\tilde{\zeta}%
_{2}\left\Vert n_{2}\right\Vert ^{2}=-\mu _{2}\tilde{\zeta}_{2}
\label{eq:zeta2tilda_dot}
\end{equation}%
which indicates that $\tilde{\zeta}_{2}=0$ is GES.
\end{proof}

Now, consider the Lyapunov function candidates 
\begin{subequations}
\label{W_3D}
\begin{align}
W_{3}& =\dfrac{1}{2}\tilde{\zeta}_{3}^{2}+\dfrac{1}{2}\tilde{\varphi}_{3}^{2}
\label{W_3} \\
W_{l}& =\dfrac{1}{2}\tilde{\zeta}_{l}^{2}+\dfrac{1}{2}\tilde{\varphi}%
_{l}^{2}+\dfrac{1}{2}\tilde{\vartheta}_{l}^{2},\quad l=4,\ldots ,N
\label{eq:3D:Wl}
\end{align}
\end{subequations}
where $i<j<k<l$, $(l,i),(l,j),(l,k)\in \mathcal{E}$.

\begin{lemma}
\label{lem:3D:3-agent-system} If $\left\Vert p_{21}\right\Vert =d_{21}$ and $%
u_{2}=0$, then (\ref{ctrl:3D-OB:SI-1}) and (\ref{ctrl:3D-OB:SI-3}) render $%
\tilde{\Lambda}_{3}=0$ GES for the three-agent system $\{1,2,3\}$.
\end{lemma}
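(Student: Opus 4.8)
The plan is to use the identity $p_{31}-p_{32}=p_{21}$ together with $u_{1}=u_{2}=0$ to reduce the dynamics of $\tilde{\Lambda}_{3}=[\tilde{\zeta}_{3},\tilde{\varphi}_{3}]$ to a decoupled, linear, exponentially stable system, and then conclude with the Lyapunov function $W_{3}$ in (\ref{W_3}).

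First I would record the structural facts. Since $p_{31}-p_{32}=(p_{1}-p_{3})-(p_{2}-p_{3})=p_{1}-p_{2}=p_{21}$, control (\ref{ctrl:3D-OB:SI-3}) reads $u_{3}=\mu_{3}\tilde{\zeta}_{3}p_{21}+\nu_{3}\tilde{\varphi}_{3}(n_{123}\times p_{21})$. Because $u_{1}=u_{2}=0$, we have $\dot{p}_{21}=u_{1}-u_{2}=0$, so $p_{21}$ is constant with $\|p_{21}\|=d_{21}$. I would then check that $n_{123}^{\intercal}p_{21}=0$ in every branch of (\ref{eq:n}): when $n_{123}\parallel p_{31}\times p_{32}$ it is orthogonal to both $p_{31}$ and $p_{32}$, hence to $p_{21}=p_{31}-p_{32}$; when $n_{123}=n_{123}^{+}$ it is orthogonal to $p_{31}(0)$ and $p_{32}(0)$, hence to $p_{21}(0)=p_{21}$. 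Consequently $(n_{123}\times p_{21})^{\intercal}p_{21}=0$ and, using $\|a\times b\|^{2}=\|a\|^{2}\|b\|^{2}-(a^{\intercal}b)^{2}$ with $\|n_{123}\|=1$, $\|n_{123}\times p_{21}\|^{2}=\|p_{21}\|^{2}=d_{21}^{2}$.

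Next I would differentiate the projection errors, using $\dot{p}_{31}=u_{1}-u_{3}=-u_{3}$, $\dot{p}_{21}=0$, and --- via Lemma~\ref{lem:ndot}, whose hypotheses hold here since $u_{2}=0$ --- $\dot{n}_{123}=0$, so that the vector $n_{123}\times p_{21}$ is time-invariant. From (\ref{eq:zeta-3}), $\dot{\tilde{\zeta}}_{3}=-u_{3}^{\intercal}p_{21}=-\mu_{3}d_{21}^{2}\tilde{\zeta}_{3}$, the $\tilde{\varphi}_{3}$ term vanishing because $n_{123}\times p_{21}\perp p_{21}$. From (\ref{eq:varphi-3}), $\dot{\tilde{\varphi}}_{3}=-u_{3}^{\intercal}(n_{123}\times p_{21})=-\nu_{3}d_{21}^{2}\tilde{\varphi}_{3}$, the $\tilde{\zeta}_{3}$ term vanishing for the same reason. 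Thus the closed loop is the decoupled linear system $\dot{\tilde{\zeta}}_{3}=-\mu_{3}d_{21}^{2}\tilde{\zeta}_{3}$, $\dot{\tilde{\varphi}}_{3}=-\nu_{3}d_{21}^{2}\tilde{\varphi}_{3}$. Finally, along $W_{3}$ I get $\dot{W}_{3}=-\mu_{3}d_{21}^{2}\tilde{\zeta}_{3}^{2}-\nu_{3}d_{21}^{2}\tilde{\varphi}_{3}^{2}\le-2d_{21}^{2}\min(\mu_{3},\nu_{3})W_{3}$; since the desired framework is non-degenerate, $d_{21}>0$, so $\tilde{\Lambda}_{3}=0$ is GES.

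There is no serious obstacle here. The only points needing care are the case analysis on $n_{123}$ in (\ref{eq:n}) to establish $n_{123}^{\intercal}p_{21}=0$, the observation that both $p_{21}$ and $n_{123}\times p_{21}$ are constant in time in this reduced setting (so that differentiating the errors leaves only the $\dot{p}_{31}=-u_{3}$ contribution), and invoking non-degeneracy to guarantee $d_{21}\neq 0$ in the exponential rate.
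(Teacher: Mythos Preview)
Your proposal is correct and follows essentially the same route as the paper: both compute the error dynamics using $p_{31}-p_{32}=p_{21}$, $u_{1}=u_{2}=0$, $\dot{n}_{123}=0$ (Lemma~\ref{lem:ndot}), and the orthogonality of $p_{21}$ and $n_{123}\times p_{21}$, then conclude GES via $W_{3}$. You add a bit more detail---the explicit case analysis showing $n_{123}^{\intercal}p_{21}=0$ and the observation that the closed loop is actually a decoupled linear system---but the argument is the same.
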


\begin{proof}
The dynamics of $\tilde{\zeta}_{3}$ is given by 
\begin{equation}
\dot{\tilde{\zeta}}_{3}=\dot{p}_{31}^{\intercal }p_{21}+p_{31}^{\intercal }%
\dot{p}_{21}=(u_{1}-u_{3})^{\intercal }p_{21}+p_{31}^{\intercal
}(u_{1}-u_{2})  \label{zeta3_tilda_dot}
\end{equation}%
where (\ref{eq:zeta-error-3d}), (\ref{eq:3D-OB:zeta-l}), and (\ref{SI model}%
) were used. Similarly, the dynamics of $\tilde{\varphi}_{3}$ can be
computed as 
\begin{eqnarray}
\dot{\tilde{\varphi}}_{3} &=&(u_{1}-u_{3})^{\intercal }(n_{123}\times
p_{21})+p_{31}^{\intercal }(\dot{n}_{123}\times p_{21})  \notag \\
&&+p_{31}^{\intercal }(n_{123}\times (u_{1}-u_{2})).
\label{varphi3_tilda_dot}
\end{eqnarray}

Therefore, the time derivative of (\ref{W_3}) is given by 
\begin{align}
\dot{W}_{3}=& \tilde{\zeta}_{3}\dot{\tilde{\zeta}}_{3}+\tilde{\varphi}_{3}%
\dot{\tilde{\varphi}}_{3}  \notag \\
=& \tilde{\zeta}_{3}\left[ p_{21}^{\intercal
}(u_{1}-u_{3})+p_{31}^{\intercal }(u_{1}-u_{2})\right]  \notag \\
& +\tilde{\varphi}_{3}\left[ (n_{123}\times p_{21})^{\intercal }\left(
u_{1}-u_{3}\right) \right.  \notag \\
& \left. +p_{31}^{\intercal }(\dot{n}_{123}\times p_{21})+p_{31}^{\intercal
}(n_{123}\times (u_{1}-u_{2}))\right] .  \label{eq:W3-dot-3D-SI}
\end{align}%
Recall from Lemma \ref{lem:ndot} that $\dot{n}_{123}=0$. Therefore,
substituting $u_{2}=0$, $\left\Vert p_{21}\right\Vert =d_{21}$, (\ref%
{ctrl:3D-OB:SI-1}), and (\ref{ctrl:3D-OB:SI-3}) into (\ref{eq:W3-dot-3D-SI})
gives 
\begin{eqnarray}
\dot{W}_{3} &=&-\left[ \tilde{\zeta}_{3}p_{21}^{\intercal }+\tilde{\varphi}%
_{3}(n_{123}\times p_{21})^{\intercal }\right] u_{3}  \notag \\
&=&-\mu _{3}\tilde{\zeta}_{3}^{2}d_{21}^{2}-\nu _{3}\tilde{\varphi}%
_{3}^{2}\left\Vert n_{123}\times p_{21}\right\Vert ^{2}  \label{W3 dot}
\end{eqnarray}%
where we used the fact that $p_{31}-p_{32}=p_{21}$. Since $\left\Vert
n_{123}\right\Vert =1$, then $\left\Vert n_{123}\times p_{21}\right\Vert
=d_{21}$ and (\ref{W3 dot}) becomes 
\begin{equation}
\dot{W}_{3}=-d_{21}^{2}\left( \mu _{3}\tilde{\zeta}_{3}^{2}+\nu _{3}\tilde{%
\varphi}_{3}^{2}\right) ,
\end{equation}%
which means $\tilde{\Lambda}_{3}=0$ is GES. 
\end{proof}

\begin{lemma}
\label{lem:3D:4-agent-system}If $\left\Vert p_{ji}\right\Vert =d_{ji}$, $%
\left\Vert p_{ki}\right\Vert =d_{ki}$, $\left\Vert p_{kj}\right\Vert =d_{kj}$%
, and $u_{i}=u_{j}=u_{k}=0$, then (\ref{ctrl:3D-OB:SI-l}) ensures that $%
\tilde{\Lambda}_{l}=\left[ \tilde{\zeta}_{l},\tilde{\varphi}_{l},\tilde{%
\vartheta}_{l}\right] =0$ is GES for the tetrahedron formed by agents $%
\{i,j,k,l\}$.
\end{lemma}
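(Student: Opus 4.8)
The plan is to replicate the Lyapunov argument of Lemma~\ref{lem:3D:3-agent-system}, now using the full candidate $W_{l}$ in (\ref{eq:3D:Wl}). The first step is to show $\dot{n}_{ijk}=0$ under the stated hypotheses, which splits into the two cases of (\ref{eq:n}). If $\{i,j,k\}\neq\{1,2,3\}$, then $n_{ijk}=p_{ki}\times p_{kj}$; since $\dot{p}_{ki}=u_{i}-u_{k}=0$ and $\dot{p}_{kj}=u_{j}-u_{k}=0$ by hypothesis, differentiating gives $\dot{n}_{ijk}=0$ directly. If $\{i,j,k\}=\{1,2,3\}$ (so $l\geq 4$ and $u_{1}=u_{2}=u_{3}=0$), then (\ref{eq:3D-OB:p31xp32-dot}) yields $\tfrac{d}{dt}(p_{31}\times p_{32})=0$, whence $\dot{n}_{123}=0$ by (\ref{eq:3D-OB:n123-dot}); the constant-vector subcase $n_{123}^{+}$ is immediate.

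Next I would compute the projection-error dynamics. Using $p_{li}-p_{lj}=p_{ji}$ and, from (\ref{SI model}) with $u_{i}=u_{j}=u_{k}=0$, that $\dot{p}_{li}=-u_{l}$ and $\dot{p}_{ji}=u_{i}-u_{j}=0$, differentiation of (\ref{eq:3D-OB:projections}) together with $\dot{n}_{ijk}=0$ reduces the error dynamics (\ref{errors}) to $\dot{\tilde{\zeta}}_{l}=-u_{l}^{\intercal}p_{ji}$, $\dot{\tilde{\varphi}}_{l}=-u_{l}^{\intercal}(n_{ijk}\times p_{ji})$, and $\dot{\tilde{\vartheta}}_{l}=-u_{l}^{\intercal}n_{ijk}$. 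Writing (\ref{ctrl:3D-OB:SI-l}) as $u_{l}=\mu_{l}\tilde{\zeta}_{l}p_{ji}+\nu_{l}\tilde{\varphi}_{l}(n_{ijk}\times p_{ji})+\lambda_{l}\tilde{\vartheta}_{l}n_{ijk}$ and substituting into $\dot{W}_{l}=\tilde{\zeta}_{l}\dot{\tilde{\zeta}}_{l}+\tilde{\varphi}_{l}\dot{\tilde{\varphi}}_{l}+\tilde{\vartheta}_{l}\dot{\tilde{\vartheta}}_{l}=-u_{l}^{\intercal}[\tilde{\zeta}_{l}p_{ji}+\tilde{\varphi}_{l}(n_{ijk}\times p_{ji})+\tilde{\vartheta}_{l}n_{ijk}]$, the mutual orthogonality of the orthogonal basis $\{p_{ji},\,n_{ijk}\times p_{ji},\,n_{ijk}\}$ of vertex $l$ kills every cross term, leaving
\[
\dot{W}_{l}=-\mu_{l}\tilde{\zeta}_{l}^{2}\left\Vert p_{ji}\right\Vert^{2}-\nu_{l}\tilde{\varphi}_{l}^{2}\left\Vert n_{ijk}\times p_{ji}\right\Vert^{2}-\lambda_{l}\tilde{\vartheta}_{l}^{2}\left\Vert n_{ijk}\right\Vert^{2}.
\]

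Finally I would bound the three coefficients below by positive constants. The hypothesis gives $\left\Vert p_{ji}\right\Vert=d_{ji}>0$; since $n_{ijk}\perp p_{ji}$, $\left\Vert n_{ijk}\times p_{ji}\right\Vert=\left\Vert n_{ijk}\right\Vert d_{ji}$; and $\left\Vert n_{ijk}\right\Vert=1$ when $\{i,j,k\}=\{1,2,3\}$, while $\left\Vert n_{ijk}\right\Vert=\left\Vert p_{ki}\times p_{kj}\right\Vert=2\breve{S}_{ijk}$ otherwise, a positive constant fixed by $d_{ji},d_{ki},d_{kj}$ through Heron's formula (\ref{regular area}) and positive because $F^{\ast}$ is non-degenerate. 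Hence $\dot{W}_{l}\leq -cW_{l}$ for some $c>0$ depending only on the gains and the (constant) distances, so $W_{l}$---and therefore $\tilde{\Lambda}_{l}$---decays exponentially, i.e.\ $\tilde{\Lambda}_{l}=0$ is GES. The computation is routine once $\dot{n}_{ijk}=0$ is in hand; the only steps needing care are the case split in (\ref{eq:n}) and the uniform positive lower bound on $\left\Vert n_{ijk}\right\Vert$, which rests on the triangle $ijk$ being non-degenerate---a consequence of the non-degeneracy of $F^{\ast}$ together with the distance constraints in force.
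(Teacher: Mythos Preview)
Your proposal is correct and follows essentially the same approach as the paper: differentiate $W_{l}$, use $u_{i}=u_{j}=u_{k}=0$ together with $\dot{n}_{ijk}=0$ to reduce $\dot{W}_{l}$ to $-[\tilde{\zeta}_{l}p_{ji}+\tilde{\varphi}_{l}(n_{ijk}\times p_{ji})+\tilde{\vartheta}_{l}n_{ijk}]^{\intercal}u_{l}$, exploit the orthogonality of the basis to eliminate cross terms, and conclude GES from the constant positive coefficients. Your explicit remark that $\left\Vert n_{ijk}\right\Vert>0$ relies on the non-degeneracy of $F^{\ast}$ is a detail the paper leaves implicit.
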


\begin{proof}
The time derivative of (\ref{eq:3D:Wl}) along the dynamics of (\ref%
{eq:zeta-error-3d}), (\ref{eq:varphi-error-3d}), and (\ref%
{eq:vartheta-error-3d}) is given by 
\begin{align}
\dot{W}_{l}=& \tilde{\zeta}_{l}\left[ p_{ji}^{\intercal}
(u_{i}-u_{l})+p_{li}^{\intercal }(u_{i}-u_{j})\right]  \notag \\
& +\tilde{\varphi}_{l}\left[ (u_{i}-u_{l})^{\intercal }(n_{ijk}\times
p_{ji})+p_{li}^{\intercal }(\dot{n}_{ijk}\times p_{ji})\right.  \notag \\
& \left. +p_{li}^{\intercal }(n_{ijk}\times (u_{i}-u_{j}))\right]  \notag \\
& +\tilde{\vartheta}_{l}\left[ n_{ijk}^{\intercal
}(u_{i}-u_{l})+p_{li}^{\intercal }\dot{n}_{ijk}\right]
\label{eq:3D-OB:W-l-dot}
\end{align}
where 
\begin{equation}
\dot{n}_{ijk}=(u_{i}-u_{k})\times p_{kj}+p_{ki}\times (u_{j}-u_{k})\text{ if 
}\{i,j,k\} \neq \{1,2,3\},  \notag
\end{equation}%
and $\dot{n}_{123}=0$ from Lemma \ref{lem:ndot}.

After substituting $\left\Vert p_{ji}\right\Vert =d_{ji}$, $%
u_{i}=u_{j}=u_{k}=0$, and (\ref{ctrl:3D-OB:SI-l}) into (\ref%
{eq:3D-OB:W-l-dot}), we obtain 
\begin{align}
\dot{W}_{l}=& - \left[ \tilde{\zeta}_{l}p_{ji}^{\intercal }+\tilde{\varphi}%
_{l}(n_{ijk}\times p_{ji})^{\intercal }+\tilde{\vartheta}_{l}n_{ijk}^{%
\intercal }\right] u_{l}  \notag \\
=& -\mu _{l}d_{ji}^{2}\tilde{\zeta}_{l}^{2}-\nu _{l}\tilde{\varphi}%
_{l}^{2}\left\Vert n_{ijk}\times p_{ji}\right\Vert ^{2}-\lambda _{l}\tilde{%
\vartheta}_{l}^{2}\left\Vert n_{ijk}\right\Vert ^{2}
\label{eq:Wl-dot-3D-SI-2}
\end{align}%
where we used the fact that $p_{ji}^{\intercal }n_{ijk}=0$. Given that $%
\left\Vert p_{ki}\right\Vert =d_{ki}$ and $\left\Vert p_{kj}\right\Vert
=d_{kj}$, we know from (\ref{eq:n}) and (\ref{regular area}) that $%
\left\Vert n_{ijk}\right\Vert $ is constant. Therefore, 
\begin{equation}
\dot{W}_{l}=-\mu _{l}d_{ji}^{2}\tilde{\zeta}_{l}^{2}-\nu _{l}d_{ji}^{2}c%
\tilde{\varphi}_{l}^{2}-\lambda _{l}c\tilde{\vartheta}_{l}^{2}
\label{Wldot1}
\end{equation}%
where $c\ $is some positive constant. It then follows from (\ref{eq:3D:Wl})
and (\ref{Wldot1}) that $\left[ \tilde{\zeta}_{l},\tilde{\varphi}_{l},\tilde{%
\vartheta}_{l}\right] =0$ is GES. 
\end{proof}

\subsection{Main Results}

The following two theorems give our main results.

\begin{theorem}
\label{thm:GAS}Control (\ref{ctrl:3D-OB:SI}) ensures $\tilde{\Lambda}=0$ is
GAS and $F(t)\rightarrow \text{SCgt}^{3}(F^{\ast })$ as $t\rightarrow \infty 
$.
\end{theorem}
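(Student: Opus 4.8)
The plan is to exploit the cascade structure that the controller and the preliminary lemmas have been carefully set up to produce. First I would observe that control \eqref{ctrl:3D-OB:SI} makes the closed loop an \emph{acyclic cascade} in the agent index: by \eqref{ctrl:3D-OB:SI-1}, $u_1=0$ so agent $1$ is stationary; agent $2$'s input \eqref{ctrl:3D-OB:SI-2} depends only on $\tilde\zeta_2$; agent $3$'s input \eqref{ctrl:3D-OB:SI-3} depends only on $p_{31},p_{32}$ (hence on the states of agents $1,2,3$) and on $n_{123}$, which by Lemma \ref{lem:ndot} is constant; and for $l\ge 4$, $u_l$ in \eqref{ctrl:3D-OB:SI-l} depends only on the relative positions $p_{li},p_{lj},p_{lk}$ with $i<j<k<l$. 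So the $\tilde\Lambda_l$-subsystem is driven only by the states (equivalently, the projection errors) of lower-indexed agents. I would write the dynamics of $\tilde\Lambda$ in block-triangular form $\dot{\tilde\Lambda}_2=g_2(\tilde\Lambda_2)$, $\dot{\tilde\Lambda}_3=g_3(\tilde\Lambda_3,\tilde\Lambda_{<3})$, \ldots, $\dot{\tilde\Lambda}_l=g_l(\tilde\Lambda_l,\tilde\Lambda_{<l})$.

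Next I would establish GES of each ``diagonal'' subsystem when its upstream inputs are at the origin. This is exactly what Lemmas \ref{lem:3D:2-agent-system}, \ref{lem:3D:3-agent-system}, and \ref{lem:3D:4-agent-system} give: Lemma \ref{lem:3D:2-agent-system} gives $\tilde\zeta_2=0$ GES unconditionally; Lemma \ref{lem:3D:3-agent-system} gives $\tilde\Lambda_3=0$ GES under $\|p_{21}\|=d_{21}$, $u_2=0$, which is precisely the condition ``$\tilde\Lambda_2=0$'' (note $\tilde\zeta_2=0$ means $\|p_{21}\|=d_{21}$ since $n_2$ is a unit vector along $p_{21}$, and then $u_2=0$); and Lemma \ref{lem:3D:4-agent-system} gives $\tilde\Lambda_l=0$ GES under $\|p_{ji}\|=d_{ji}$, $\|p_{ki}\|=d_{ki}$, $\|p_{kj}\|=d_{kj}$, $u_i=u_j=u_k=0$, which again follows from $\tilde\Lambda_i=\tilde\Lambda_j=\tilde\Lambda_k=0$. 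The subtle point to check is that the distance equalities $\|p_{ji}\|=d_{ji}$ really are implied by the upstream errors vanishing; I would argue this inductively: if $\tilde\Lambda_m=0$ for all $m<l$, then by the already-established congruency machinery (Lemma \ref{lem:scgt-ortho-3d}) the sub-framework on $\{1,\ldots,l-1\}$ is strongly congruent to $F^\ast$, so all its edge lengths equal the desired ones and all those agents are at rest.

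Then I would invoke the ISS cascade machinery. The $\tilde\Lambda_l$-subsystem with the upstream errors as input: I would check it satisfies the hypotheses of Lemma \ref{lem:global-ISS} (continuously differentiable, globally Lipschitz in the combined state-input variable, and GES at the origin when the input is zero) to conclude it is ISS with respect to $\tilde\Lambda_{<l}$. A mild wrinkle is global Lipschitzness: the vector fields contain products of positions, which are not globally Lipschitz on all of $\mathbb{R}^{3N}$; the standard fix (used in this literature) is that trajectories stay in a compact set — $W_l$ being non-increasing along the flow bounds the relevant errors, and one works on that invariant compact set, or one simply notes the projection-error dynamics are polynomial and invokes the local-version arguments. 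Having ISS for each block and GAS (indeed GES) of the bottom block $\tilde\zeta_2$, repeated application of Lemma \ref{lem:global-interconn} up the chain $2\to 3\to\cdots\to N$ yields that the origin $\tilde\Lambda=0$ of the full cascade is GAS. Finally, $\tilde\Lambda(t)\to 0$ means $\Lambda(p(t))\to\Lambda(p^\ast)$, which by Lemma \ref{lem:scgt-ortho-3d} and \eqref{3d-obj} is equivalent to $F(t)\to\mathrm{SCgt}^3(F^\ast)$.

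The main obstacle I anticipate is the rigorous handling of the global Lipschitz / ISS hypotheses: strictly speaking Lemma \ref{lem:global-ISS} as quoted wants global Lipschitzness, which fails for polynomial vector fields, so the proof must either restrict to a forward-invariant compact set (justified by the monotonicity of the $W_l$'s established in the preliminary lemmas, which bounds $\tilde\Lambda$ and hence the inter-agent distances, though not obviously the absolute positions since agent~1 is fixed but the downstream agents could in principle wander — one needs that bounded projection errors plus bounded upstream positions bound the current agent's position) or replace the cited lemmas with their standard semiglobal/``ISS on compact sets'' counterparts. I would also double-check the base-case bookkeeping that $\tilde\zeta_2=0 \iff \|p_{21}\|=d_{21}$ together with $u_2=0$, since Lemma \ref{lem:3D:3-agent-system} is stated with ``$u_2=0$'' as a hypothesis rather than ``$\tilde\Lambda_2=0$,'' and similarly that the hypotheses of Lemma \ref{lem:3D:4-agent-system} are the exact specialization of the upstream-errors-vanishing condition.
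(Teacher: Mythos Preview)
Your proposal is correct and follows essentially the same route as the paper's proof: exploit the acyclic cascade structure, use Lemmas \ref{lem:3D:2-agent-system}--\ref{lem:3D:4-agent-system} to get GES of each block when the upstream errors vanish (with Lemma \ref{lem:scgt-ortho-3d} supplying the needed distance equalities inductively), and then apply Lemmas \ref{lem:global-ISS} and \ref{lem:global-interconn} recursively up to $l=N$, concluding via Lemma \ref{lem:scgt-ortho-3d}. The paper proceeds exactly this way and does not address the global-Lipschitz hypothesis you flag; it simply invokes Lemma \ref{lem:global-ISS} directly, so your caution there goes beyond what the paper itself justifies.
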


\begin{proof}
We know from Lemma \ref{lem:3D:2-agent-system} that the subsystem composed
of agents 1 and 2 is GES at $\tilde{\zeta}_{2}=0$. If a third agent is added
to this subsystem, we get the cascade system 
\begin{subequations}
\label{eq:3D-OB:intercon-3}
\begin{align}
\overset{\cdot }{\tilde{\Lambda}_{3}}& =f_{3}(\tilde{\Lambda}_{3},\tilde{%
\zeta}_{2})  \label{eq:3D-OB:sub-3-xi-dot} \\
\overset{\cdot }{\tilde{\zeta}}_{2}& =g_{2}(\tilde{\zeta}_{2})
\label{eq:3D-OB:sub-3-Xi-dot}
\end{align}
\end{subequations}
where (\ref{eq:3D-OB:sub-3-Xi-dot}) is in fact (\ref{eq:zeta2tilda_dot}). If 
$\tilde{\zeta}_{2}=0$, then $u_{2}=0$ from (\ref{ctrl:3D-OB:SI-2}) and $%
\left\Vert p_{21}\right\Vert =d_{21}$ from (\ref{eq:Lambda_i}) and (\ref%
{eq:zeta-error-3d}). Therefore, (\ref{eq:3D-OB:sub-3-xi-dot}) with $\tilde{%
\zeta}_{2}=0$ is GES at the origin by Lemma \ref{lem:3D:3-agent-system}. It
then follows from Lemma \ref{lem:global-ISS} that (\ref%
{eq:3D-OB:sub-3-xi-dot}) is ISS with respect to input $\tilde{\zeta}_{2}$.
Finally, we can use Lemma \ref{lem:global-interconn} to show that the origin
of (\ref{eq:3D-OB:intercon-3}), i.e., $\left[ \tilde{\zeta}_{2},\tilde{%
\Lambda}_{3}\right] =0$, is GAS.

As we grow the graph step-by-step in the analysis by adding a vertex $l$
with three outgoing edges to any distinct vertices $i$, $j$ and $k$ of the
previous graph, we obtain the following cascade system at each step: 
\begin{subequations}
\label{eq:3D-OB:interconn-l}
\begin{align}
\overset{\cdot }{\tilde{\Lambda}_{l}}=& f_{l}(\tilde{\Lambda}_{l},z_{l-1})
\label{eq:3D-OB:sub-l-xi-dot} \\
\dot{z}_{l-1}=& g_{l-1}(z_{l-1})  \label{eq:3D-OB:sub-l-Xi-dot}
\end{align}
\end{subequations}
where $z_{l-1}=\left[ \tilde{\Lambda}_{2},\ldots ,\tilde{\Lambda}_{l-1}%
\right] $ and $i<j<k<l$. Note that the GAS of $z_{l-1}=0$ for (\ref%
{eq:3D-OB:sub-l-Xi-dot}) was established in the previous step. Therefore, we
only need to check if (\ref{eq:3D-OB:sub-l-xi-dot}) is ISS with respect to
input $z_{l-1}$. If $z_{l-1}=0$, then $u_{i}=u_{j}=u_{k}=0$ from (\ref%
{ctrl:3D-OB:SI-l}). Now, consider subframeworks $F_{l-1}=\left(
G_{l-1},p\right) $ and $F_{l-1}^{\ast }=\left( G_{l-1},p^{\ast }\right) $
where $G_{l-1}$ is the subgraph of $G$ that contains vertices $\{1,\ldots
,l-1\}$ and the corresponding edges connecting these vertices in the
original graph. The condition $z_{l-1}=0$ is equivalent to $\left[ \Lambda
_{2},\ldots ,\Lambda _{l-1}\right] =\left[ \Lambda _{2}^{\ast },\ldots
,\Lambda _{l-1}^{\ast }\right] $ where $\Lambda _{i}^{\ast }=\Lambda _{i}(%
\breve{p}^{\ast })$, and $\breve{p}^{\ast }=\left[ p_{1}^{\ast },\ldots
,p_{l-1}^{\ast }\right] $. This indicates that $F_{l-1}$ and $F_{l-1}^{\ast
} $ are strongly congruent from Lemma \ref{lem:scgt-ortho-3d}. Thus, we know 
$\left\Vert p_{ji}\right\Vert =d_{ji}$, $\left\Vert p_{ki}\right\Vert
=d_{ki} $, and $\left\Vert p_{kj}\right\Vert =d_{kj}$ from Lemma \ref%
{lem:scgt-3d}. We can now use Lemma \ref{lem:3D:4-agent-system} to show that
(\ref{eq:3D-OB:sub-l-xi-dot}) with $z_{l-1}=0$ is GES at the origin. As a
result, (\ref{eq:3D-OB:sub-l-xi-dot}) is ISS by Lemma \ref{lem:global-ISS}.
Finally, we can invoke Lemma \ref{lem:global-interconn} to conclude that $\left[ z_{l-1},\tilde{\Lambda}_{l}\right] =0$ in (\ref{eq:3D-OB:interconn-l}) is GAS.

Repeating this process until $l = N$ leads to the conclusion that $\tilde{%
\Lambda}=0$ is GAS, which implies $\Lambda (p(t)) \rightarrow \Lambda
(p^{\ast})$ as $t \rightarrow \infty$. Thus, we know from Lemma \ref%
{lem:scgt-ortho-3d} that $F(t)\rightarrow \text{SCgt}^{3}(F^{\ast })$ as $t
\rightarrow \infty$. 
\end{proof}

Next, we show that the proposed control yields \textit{local exponential} convergence to the equilibrium point. This property is important in practice since exponential stability is known to provide some level of robustness to the system \citep{khalil2015nonlinear}.

\begin{theorem}
\label{thm:LES} In the neighborhood of $\tilde{\Lambda}=0$, the equilibrium point is locally exponentially stable (LES).
\end{theorem}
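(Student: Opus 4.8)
The plan is to establish LES via Lyapunov's indirect method applied to the closed-loop \emph{error} dynamics, reusing the cascade structure already exploited in the proof of Theorem~\ref{thm:GAS}. The first step is to confirm that in a neighborhood of $\tilde{\Lambda}=0$ the error dynamics constitute a genuine autonomous system $\dot{\tilde{\Lambda}}=\Phi(\tilde{\Lambda})$. Because $F^{\ast}$ is non-degenerate, near the equilibrium $\left\Vert p_{21}\right\Vert$ is close to $d_{21}>0$ and $\left\Vert p_{31}\times p_{32}\right\Vert$ is bounded away from zero, so the ``generic'' branches of $n_{2}$, $n_{123}$, $n_{ijk}$ in (\ref{eq:n2}) and (\ref{eq:n}) are smooth functions of the relative positions; control (\ref{ctrl:3D-OB:SI}) is invariant under translations and equivariant under $SO(3)$, and $\Lambda$ is rigid-motion invariant and -- by Lemma~\ref{lem:3d-error-variables} together with the lower-triangular dependence in (\ref{eq:Lambda_i}) -- a local submersion onto shape space at $F^{\ast}$. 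Hence $\tilde{\Lambda}$ is a legitimate set of local coordinates, $\dot{\tilde{\Lambda}}$ depends on $\tilde{\Lambda}$ alone, and, since agent $l$'s projection errors are driven only by the inputs of $\{i,j,k,l\}$ with $i<j<k<l$, the system inherits the cascade form $\dot{\tilde{\Lambda}}_{l}=f_{l}(\tilde{\Lambda}_{l},z_{l-1})$, $l=2,\ldots,N$, with $z_{l-1}=[\tilde{\Lambda}_{2},\ldots,\tilde{\Lambda}_{l-1}]$, $f_{l}$ continuously differentiable, and $f_{l}(0,0)=0$.

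Second, I would linearize at $\tilde{\Lambda}=0$. By the cascade structure the Jacobian $A=\partial\Phi/\partial\tilde{\Lambda}$ evaluated at the origin is block lower-triangular, so its eigenvalues are those of the diagonal blocks $A_{l}=\partial f_{l}/\partial\tilde{\Lambda}_{l}\big|_{(0,0)}$. These blocks are already exhibited by the preliminary lemmas: putting $z_{l-1}=0$ forces the parents' inputs to zero and their mutual distances to the desired values, which -- via the orthogonality identities $p_{ji}^{\intercal}n_{ijk}=0$ and $p_{ji}^{\intercal}(n_{ijk}\times p_{ji})=0$ used in the derivations of (\ref{eq:zeta2tilda_dot}), (\ref{eq:Wl-dot-3D-SI-2}) and (\ref{Wldot1}) -- makes $f_{l}(\cdot,0)$ \emph{exactly} linear and diagonal, namely $f_{l}(\tilde{\Lambda}_{l},0)=-\mathrm{diag}\!\left(\mu_{l}d_{ji}^{2},\,\nu_{l}\left\Vert n_{ijk}\times p_{ji}\right\Vert^{2},\,\lambda_{l}\left\Vert n_{ijk}\right\Vert^{2}\right)\tilde{\Lambda}_{l}$ for $l\geq4$, with the obvious reductions $f_{3}(\tilde{\Lambda}_{3},0)=-d_{21}^{2}\,\mathrm{diag}(\mu_{3},\nu_{3})\tilde{\Lambda}_{3}$ and $f_{2}(\tilde{\zeta}_{2})=-\mu_{2}\tilde{\zeta}_{2}$. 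Since the gains are positive and non-degeneracy of $F^{\ast}$ gives $d_{ji}>0$, $\left\Vert n_{ijk}\right\Vert>0$, and $\left\Vert n_{ijk}\times p_{ji}\right\Vert=\left\Vert n_{ijk}\right\Vert d_{ji}>0$, every $A_{l}$ is Hurwitz; hence $A$ is Hurwitz and Lyapunov's indirect method (e.g., Theorem~4.7 in \citep{khalil2015nonlinear}) yields that $\tilde{\Lambda}=0$ is LES.

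The differentiations needed are precisely those already performed in Lemmas~\ref{lem:3D:2-agent-system}--\ref{lem:3D:4-agent-system}, so the only genuinely delicate point -- and the one I would justify most carefully -- is the first step: verifying that near the equilibrium $\tilde{\Lambda}$ really is a non-degenerate set of local coordinates, so that the linearization is meaningful; this is exactly where the mutual orthogonality of the error planes from Lemma~\ref{lem:3d-error-variables} (equivalently, independence of the constraint gradients at $F^{\ast}$ modulo rigid motions) is used. A coordinate-free alternative is to build a composite quadratic Lyapunov function $V=\sum_{l=2}^{N}\epsilon_{l}W_{l}$ with small positive weights $\epsilon_{l}$ chosen so that the lower-triangular coupling terms are dominated by the negative-definite diagonal terms in (\ref{eq:zeta2tilda_dot}), (\ref{Wldot1}); since $V$ is quadratic and $\dot{V}$ is negative definite on a neighborhood, this also gives LES, at the cost of a more tedious tuning of the $\epsilon_{l}$.
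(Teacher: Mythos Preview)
Your proposal is correct and follows essentially the same route as the paper: express the closed-loop error dynamics as $\dot{\tilde{\Lambda}}=-A(\tilde{\Lambda})\tilde{\Lambda}$ with $A$ lower triangular, observe that the diagonal entries of $A(0)$ are exactly the positive constants $\mu_{2},\ \mu_{3}d_{21}^{2},\ \nu_{3}d_{21}^{2},\ \mu_{l}d_{ji}^{2},\ \nu_{l}\left\Vert n_{ijk}\times p_{ji}\right\Vert^{2},\ \lambda_{l}\left\Vert n_{ijk}\right\Vert^{2}$ you list, and conclude LES by Lyapunov's indirect method. The paper's argument is terser---it simply asserts the lower-triangular form and the diagonal entries---so your first step (justifying that $\tilde{\Lambda}$ furnishes bona fide local coordinates near the non-degenerate equilibrium via Lemma~\ref{lem:3d-error-variables} and rigid-motion invariance) actually supplies a detail the paper leaves implicit.
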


\begin{proof}
The error dynamics for $\tilde{\Lambda}$ can be expressed as 
\begin{equation}
\dot{\tilde{\Lambda}}=-A(\tilde{\Lambda})\tilde{\Lambda}
\label{eq:3D:matrix-form}
\end{equation}%
where 
\begin{equation}
A(\tilde{\Lambda})=\left[ 
\begin{array}{cccccccccc}
D_{1} & 0 & 0 & 0 &  & \cdots &  & \cdots &  & 0 \\ 
\star & D_{2} & 0 & 0 &  & \cdots &  & \cdots &  & 0 \\ 
\star & 0 & D_{3} & 0 &  & \cdots &  & \cdots &  & 0 \\ 
\star & \star & \star & D_{4} & 0 & 0 & 0 & \cdots &  & 0 \\ 
\star & \star & \star & 0 & D_{5} & 0 & 0 & \cdots &  & 0 \\ 
\star & \star & \star & 0 & 0 & D_{6} & 0 & \cdots &  & 0 \\ 
\vdots &  &  &  &  &  & \ddots &  &  & \vdots \\ 
\star &  & \cdots &  & \cdots &  & \star & \star & 0 & 0 \\ 
\star &  & \cdots &  & \cdots &  & \star & 0 & \star & 0 \\ 
\star &  & \cdots &  & \cdots &  & \star & 0 & 0 & D_{3N-6}%
\end{array}
\right]  \label{P}
\end{equation}
is a lower triangular $\left[ 3\left( N-3\right) +3\right] \times $ $\left[
3\left( N-3\right) +3\right] $ matrix whose diagonal elements are $D_{1}=\mu
_{2}$, $D_{2}=\mu _{3}\left\Vert p_{21}\right\Vert ^{2}$, $D_{3}=\nu
_{3}\left\Vert n_{123}\times p_{21}\right\Vert ^{2}$, $D_{4}=\mu
_{4}\left\Vert p_{21}\right\Vert ^{2}$, $D_{5}=\nu _{4}\left\Vert
n_{123}\times p_{21}\right\Vert ^{2}$, $D_{6}=\lambda _{4}\left\Vert
n_{123}\right\Vert ^{2}$, $\ldots $, $D_{3N-8}=\mu _{N}\left\Vert
p_{N_{1}N_{2}}\right\Vert ^{2}$, $D_{3N-7}=\nu_{N}\left\Vert
n_{N_{1}N_{2}N_{3}}\times p_{N_{1}N_{2}}\right\Vert ^{2}$, and $%
D_{3N-6}=\lambda _{N}\left\Vert n_{N_{1}N_{2}N_{3}}\right\Vert ^{2}$ where $%
N_{1}$, $N_{2}$, and $N_{3}$ ($N_{1}<N_{2}<N_{3}$) are the out-neighbors of
agent $N$.

Linearizing (\ref{eq:3D:matrix-form}) at the equilibrium $\tilde{\Lambda}=0$
yields 
\begin{equation}
\dot{\tilde{\Lambda}}\approx -A(0)\tilde{\Lambda}
\end{equation}%
where $A(0)$ is a constant matrix whose eigenvalues (diagonal elements) are
positive and can be made arbitrarily large by adjusting the control gains. Therefore, $\tilde{\Lambda}=0$ is LES.
\end{proof}

\begin{rmk}
It is worth noting that the 2D formation problem can be viewed as a
degenerate case of the 3D problem. Specifically, we can consider the
coordinates of agent $i$ as $p_{i}=\left[ x_{i},y_{i},0\right] $ and express
the control law with the third component equal to zero, i.e., $u_{i}=\left[
u_{ix},u_{iy},0\right] $. The results in Theorems \ref{thm:GAS} and \ref%
{thm:LES} are also valid in the 2D case. This analysis is omitted here since
it is based on similar arguments as above.
\end{rmk}

\section{Conclusion}

This paper introduced a new set of controlled variables for avoiding
undesirable equilibria in the 3D distance-based formation control approach.
The proposed variables, which are related to the inter-agent distances,
signed volume of the framework, and areas of the triangular faces, form an
orthogonal basis that decomposes the control inputs into orthogonal
subspaces. The resulting formation controller guarantees the global
asymptotic stability and the local exponential stability of the desired
formation shape with any initial conditions. This result is valid for any
tetrahedralized-like framework with no conditions on the formation shape or
control gains.

The proposed approach can also handle the 2D formation problem, unifying the
two problems. Specifically, by setting the $z$-coordinate of each agent
position to zero, we arrive at the orthogonal basis controller for 2D
formations that appeared in our preliminary result in \cite{liu2020ortho}.

\begin{appendices}
\section{Lemma Proofs}

\subsection{Lemma \protect\ref{lem:scgt-3d}}

\label{Sec:proof:scgt-3d}

Consider Figure \ref{fig:dihedral_angle_directed_height} where $%
n_{123}=p_{23}\times p_{21}$ and $n_{124}=p_{24}\times p_{21}$ are the
vectors normal to planes 1-2-3 and 1-2-4, respectively, and $\alpha $ is the
dihedral angle between the two planes. Then, we have that 
\begin{eqnarray}
\cos \alpha &=&\frac{n_{123}^{\intercal }n_{124}}{\left\Vert
n_{123}\right\Vert \left\Vert n_{124}\right\Vert }  \notag \\
&=&\frac{(p_{23}\times p_{21})^{\intercal }(p_{24}\times p_{21})}{\left\Vert
p_{23}\right\Vert \left\Vert p_{21}\right\Vert \sin \theta _{123}\left\Vert
p_{24}\right\Vert \left\Vert p_{21}\right\Vert \sin \theta _{124}}
\label{eq:cos-alpha1}
\end{eqnarray}%
where $\theta _{ijk}$ is the face angle between edges $(j,i)$ and $(j,k)$.
Using property 
\begin{equation}
(a\times b)^{\intercal }(c\times d)=(a^{\intercal }c)(b^{\intercal
}d)-(a^{\intercal }d)(b^{\intercal }c),
\end{equation}%
we obtain%
\begin{eqnarray}
&&(p_{23}\times p_{21})^{\intercal }(p_{24}\times p_{21})  \notag \\
&=&(p_{23}^{\intercal }p_{24})(p_{21}^{\intercal }p_{21})-(p_{23}^{\intercal
}p_{21})(p_{21}^{\intercal }p_{24})  \notag \\
&=&\left\Vert p_{23}\right\Vert \left\Vert p_{24}\right\Vert \cos \theta
_{423}\left\Vert p_{21}\right\Vert ^{2}  \notag \\
&&-\left\Vert p_{23}\right\Vert \left\Vert p_{21}\right\Vert \cos \theta
_{123}\left\Vert p_{21}\right\Vert \left\Vert p_{24}\right\Vert \cos \theta
_{124}  \notag \\
&=&\left\Vert p_{21}\right\Vert ^{2}\left\Vert p_{23}\right\Vert \left\Vert
p_{24}\right\Vert \left( \cos \theta _{423}-\cos \theta _{123}\cos \theta
_{124}\right) .  \label{eq:cos-alpha-den_1}
\end{eqnarray}%
Substituting (\ref{eq:cos-alpha-den_1}) into (\ref{eq:cos-alpha1}) gives 
\begin{equation}
\cos \alpha =\frac{\cos \theta _{423}-\cos \theta _{123}\cos \theta _{124}}{%
\sin \theta _{123}\sin \theta _{124}}.  \label{eq:cos-dihedral-angle}
\end{equation}

A dihedral angle $\alpha $ can be associated with the signed volume by
defining a \textit{signed dihedral angle}, $\alpha _{s}$. To this end, given
the tetrahedron in Figure \ref{fig:dihedral_angle_directed_height}, we can
define its \textit{signed height}, $h$, to have the same sign as the signed
volume. Then, 
\begin{equation}
\sin \alpha _{s}=\frac{h}{b}  \label{eq:sin-dihedral-angle}
\end{equation}%
where $b$ is the distance from vertex $4$ to edge $(1,2)$. Combining (\ref%
{eq:cos-dihedral-angle}) and (\ref{eq:sin-dihedral-angle}), we can calculate
the signed dihedral angle as 
\begin{equation}
\alpha _{s}=\arctan \text{2}(h/b,\cos \alpha ).  \label{alpha_s}
\end{equation}%
\begin{figure}[tbph]
\centering
\adjincludegraphics[scale=1.1, trim={{0.0\width}
		{0.0\height} {0.0\width} {0.00\height}},clip]{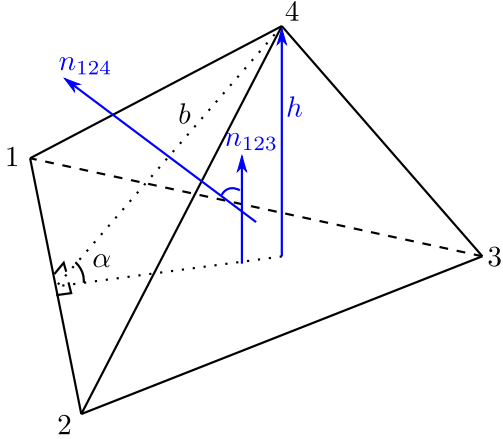}
\caption{Signed dihedral angle and signed height of a tetrahedron.}
\label{fig:dihedral_angle_directed_height}
\end{figure}

Now, we can prove Lemma \ref{lem:scgt-3d} as follows.

\textit{(Proof of $\Rightarrow $)} If $F$ and $\hat{F}$ are strongly
congruent, then $\left\Vert p_{i}-p_{j}\right\Vert =\left\Vert \hat{p}_{i} - 
\hat{p}_{j}\right\Vert $, $\forall i,j\in \mathcal{V}$ and $\mathbf{V}(p) = 
\mathbf{V}(\hat{p})$ by definition. Therefore, since $\mathcal{E}\subset 
\mathcal{V}\times \mathcal{V}$, we know $\left\Vert p_{i}-p_{j}\right\Vert =
\left\Vert \hat{p}_{i}-\hat{p}_{j}\right\Vert $, $\forall (i,j)\in \mathcal{E%
}$, i.e., $F$ and $\hat{F}$ are equivalent.

\textit{(Proof of $\Leftarrow$)} If $\left\vert \mathcal{V} \right\vert = 4$%
, then framework equivalency and congruency are equivalent, so the
conditions for strong congruency are trivially satisfied.

If a vertex is added such that $\left\vert \mathcal{V}\right\vert =5$, the
resulting 3D framework will have three additional edges and one additional
tetrahedron. Consider without loss of generality the framework in Figure \ref%
{fig:tri_bipyramid}, and denote the signed dihedral angle between planes
1-2-3 and 1-2-4 as $\alpha_{s1}$ and between planes 1-2-3 and 1-2-5 as $%
\alpha _{s2}$. Then, the signed dihedral angle between planes 1-2-4 and
1-2-5 is $\alpha _{s3}=\alpha _{s1}+\alpha _{s2}$.

\begin{figure}[tbph]
\centering
\adjincludegraphics[scale=0.9, trim={{0.0\width}
		{0.0\height} {0.0\width} {0.00\height}},clip]{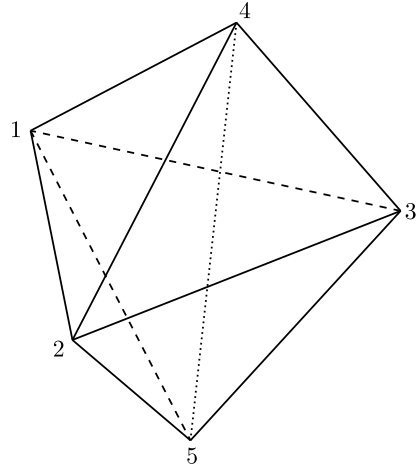}
\caption{Framework with $\left\vert \mathcal{V}\right\vert =5$ containing
two tetrahedrons.}
\label{fig:tri_bipyramid}
\end{figure}

Since $F$ and $\hat{F}$ are equivalent, $\left\Vert p_{i}-p_{j}\right\Vert
=\left\Vert \hat{p}_{i}-\hat{p}_{j}\right\Vert $, $\forall (i,j)\in \mathcal{%
E}$. This along with $\mathbf{V}(p)=\mathbf{V}(\hat{p})$ indicates that all
face angles in $F$ and $\hat{F}$ are equal. From (\ref{alpha_s}), we then
know that $\alpha _{s1}=\hat{\alpha}_{s1}$, $\alpha _{s2}=\hat{\alpha}_{s2}$%
, and $\alpha _{s3}=\hat{\alpha}_{s3}$ where $\hat{\alpha}_{si}$ denotes the 
$i$th dihedral angle of $\hat{F}$. Using (\ref{eq:cos-dihedral-angle}), we
have 
\begin{equation}
\cos \alpha _{s3}=\frac{\cos \theta _{425}-\cos \theta _{124}\cos \theta
_{125}}{\sin \theta _{124}\sin \theta _{125}}.
\end{equation}%
Since $\alpha _{s3}(p)=\alpha _{s3}(\hat{p})$ and all face angles are equal,
we obtain 
\begin{equation}
\cos \theta _{425}=\cos \hat{\theta}_{425}.  \label{cos425}
\end{equation}%
After applying (\ref{cos425}) to the law of cosines, we arrive at $%
\left\Vert p_{5}-p_{4}\right\Vert =\left\Vert \hat{p}_{5}-\hat{p}%
_{4}\right\Vert $. This proves that $\left\Vert p_{i}-p_{j}\right\Vert
=\left\Vert \hat{p}_{i}-\hat{p}_{j}\right\Vert $, $\forall i,j\in \mathcal{V}
$, so $F$ and $\hat{F}$ are strongly congruent for $\left\vert \mathcal{V}%
\right\vert =5$.\footnote{%
This analysis is not restricted to convex structures such as Figure \ref{fig:tri_bipyramid}. A similar analysis can be applied to concave cases.}

As more vertices are added, each new vertex will create a new tetrahedron.
Thus, the above process can be recursively employed to show that $F$ and $\hat{F}$ are strongly congruent for $\left\vert \mathcal{V}\right\vert =N$.

\subsection{Lemma \protect\ref{lem:scgt-ortho-3d}}

\label{Sec:proof:scgt-ortho-3d}

\textit{(Proof of $\Rightarrow $)} From (\ref{eq:Lambda_i}), (\ref%
{eq:3D-OB:zeta-l}), and the fact that strong congruency implies $\left\Vert
p_{ij}\right\Vert =\left\Vert \hat{p}_{ij}\right\Vert $ $\forall i, j \in 
\mathcal{V}$, we have 
\begin{equation}
\zeta _{2}(p)=\left\Vert p_{21}\right\Vert =\left\Vert \hat{p}%
_{21}\right\Vert =\zeta _{2}(\hat{p})  \label{eq:3D-OB:zeta2-equal}
\end{equation}
and 
\begin{align}
\zeta _{3}(p)=& p_{31}^{\intercal }p_{21}=\left\Vert p_{31}\right\Vert
\left\Vert p_{21}\right\Vert \frac{\left\Vert p_{31}\right\Vert
^{2}+\left\Vert p_{21}\right\Vert ^{2}-\left\Vert p_{32}\right\Vert ^{2}}{%
2\left\Vert p_{31}\right\Vert \left\Vert p_{21}\right\Vert }  \notag \\
=& \frac{\left\Vert p_{31}\right\Vert ^{2}+\left\Vert p_{21}\right\Vert
^{2}-\left\Vert p_{32}\right\Vert ^{2}}{2}  \notag \\
=& \frac{\left\Vert \hat{p}_{31}\right\Vert ^{2}+\left\Vert \hat{p}%
_{21}\right\Vert ^{2}-\left\Vert \hat{p}_{32}\right\Vert ^{2}}{2}=\zeta _{3}(%
\hat{p}).  \label{zeta3}
\end{align}
From (\ref{eq:varphi-3}) and (\ref{regular area}), we obtain 
\begin{align}
\varphi _{3}(p)& = n_{123}^{\intercal} \left(p_{31} \times p_{32}\right) =
\left\Vert p_{31} \times p_{32} \right\Vert = 2\breve{S}_{123}(p)  \notag \\
& =\dfrac{1}{2}\left( 2\left\Vert p_{21}\right\Vert ^{2}\left\Vert
p_{32}\right\Vert ^{2}+2\left\Vert p_{31}\right\Vert ^{2}\left\Vert
p_{32}\right\Vert ^{2}-\left\Vert p_{21}\right\Vert ^{4}\right.  \notag \\
& \left. \quad -\left\Vert p_{31}\right\Vert ^{4}-\left\Vert
p_{32}\right\Vert ^{4}+2\left\Vert p_{21}\right\Vert ^{2}\left\Vert
p_{31}\right\Vert ^{2}\right) ^{1/2}  \notag \\
& =\dfrac{1}{2}\left( 2\left\Vert \hat{p}_{21}\right\Vert ^{2}\left\Vert 
\hat{p}_{32}\right\Vert ^{2}+2\left\Vert \hat{p}_{31}\right\Vert
^{2}\left\Vert \hat{p}_{32}\right\Vert ^{2}-\left\Vert \hat{p}%
_{21}\right\Vert ^{4}\right.  \notag \\
& \left. \quad -\left\Vert \hat{p}_{31}\right\Vert ^{4}-\left\Vert \hat{p}%
_{32}\right\Vert ^{4}+2\left\Vert \hat{p}_{21}\right\Vert ^{2}\left\Vert 
\hat{p}_{31}\right\Vert ^{2}\right) ^{1/2}  \notag \\
& =2\breve{S}_{123}(\hat{p})=\varphi _{3}(\hat{p}).
\label{eq:3D-OB:varphi3-equal}
\end{align}

The relation $\zeta _{4}(p)=\zeta _{4}(\hat{p})$ can be shown as in (\ref%
{zeta3}). From (\ref{eq:3D-OB:varphi-l}), (\ref{eq:n}), and (\ref{regular
area}), we get 
\begin{align}
\varphi _{4}(p)& =n_{123}^{\intercal }\left( p_{21}\times p_{41}\right) 
\notag \\
=& \left( \dfrac{p_{31}\times p_{32}}{\left\Vert p_{31}\times
p_{32}\right\Vert }\right) ^{\intercal }\left( p_{21}\times p_{41}\right) 
\notag \\
=& \dfrac{1}{\left\Vert p_{31}\times p_{32}\right\Vert }\left( -p_{32}\times
p_{31}\right) ^{\intercal }\left( -p_{41}\times p_{21}\right)  \notag \\
=& \dfrac{1}{\left\Vert p_{31}\times p_{32}\right\Vert }\left[ \left(
-p_{32}\times p_{32}-p_{32}\times p_{21}\right) ^{\intercal }\left(
-p_{42}\times p_{21}\right. \right.  \notag \\
& \left. \left. -p_{21}\times p_{21}\right) \right]  \notag \\
=& \dfrac{1}{\left\Vert p_{31}\times p_{32}\right\Vert }\left( p_{32}\times
p_{12}\right) ^{\intercal }\left( p_{42}\times p_{12}\right)  \notag \\
=& \dfrac{1}{2\breve{S}_{123}(p)}\left[ (p_{32})^{\intercal
}p_{42}(p_{12})^{\intercal }p_{12}-(p_{32})^{\intercal
}p_{12}(p_{12})^{\intercal }p_{42}\right]  \notag \\
=& \dfrac{1}{2\breve{S}_{123}(p)}\left[ \left\Vert p_{32}\right\Vert
\left\Vert p_{42}\right\Vert \dfrac{\left\Vert p_{32}\right\Vert
^{2}+\left\Vert p_{42}\right\Vert ^{2}-\left\Vert p_{43}\right\Vert ^{2}}{%
2\left\Vert p_{32}\right\Vert \left\Vert p_{42}\right\Vert }\left\Vert
p_{12}\right\Vert ^{2}\right.  \notag \\
& \quad \left. -\left\Vert p_{32}\right\Vert \left\Vert p_{12}\right\Vert 
\dfrac{\left\Vert p_{32}\right\Vert ^{2}+\left\Vert p_{12}\right\Vert
^{2}-\left\Vert p_{13}\right\Vert ^{2}}{2\left\Vert p_{32}\right\Vert
\left\Vert p_{12}\right\Vert }\right.  \notag \\
& \quad \left. \cdot \left\Vert p_{12}\right\Vert \left\Vert
p_{42}\right\Vert \dfrac{\left\Vert p_{12}\right\Vert ^{2}+\left\Vert
p_{42}\right\Vert ^{2}-\left\Vert p_{41}\right\Vert ^{2}}{2\left\Vert
p_{12}\right\Vert \left\Vert p_{42}\right\Vert }\right] .  \label{varphi4}
\end{align}%
Since (\ref{varphi4}) is only a function of the inter-agent distances, then $%
\varphi _{4}(p)=\varphi _{4}(\hat{p})$. A useful formula for calculating the
signed volume $V_{ijkl}$ is given by the Cayley-Menger determinant \citep{sommerville2011introduction}: 
\begin{equation}
V_{ijkl}=\pm \sqrt{\frac{1}{288}\left\vert 
\begin{array}{ccccc}
0 & 1 & 1 & 1 & 1 \\ 
1 & 0 & \left\Vert p_{ji}\right\Vert ^{2} & \left\Vert p_{ki}\right\Vert ^{2}
& \left\Vert p_{li}\right\Vert ^{2} \\ 
1 & \left\Vert p_{ji}\right\Vert ^{2} & 0 & \left\Vert p_{kj}\right\Vert ^{2}
& \left\Vert p_{lj}\right\Vert ^{2} \\ 
1 & \left\Vert p_{ki}\right\Vert ^{2} & \left\Vert p_{kj}\right\Vert ^{2} & 0
& \left\Vert p_{lk}\right\Vert ^{2} \\ 
1 & \left\Vert p_{li}\right\Vert ^{2} & \left\Vert p_{lj}\right\Vert ^{2} & 
\left\Vert p_{lk}\right\Vert ^{2} & 0%
\end{array}%
\right\vert },  \label{eq:V*}
\end{equation}%
where the sign convention described in Section \ref{Sec: Strong Congr}
determines if the sign is positive or negative. From (\ref%
{eq:3D-OB:vartheta-l-volume-123}), we have $\vartheta _{4}(p)= -
3V_{1234}(p)/\breve{S}_{123}(p)$. Given (\ref{regular area}) and (\ref{eq:V*}%
), we can see that $\vartheta _{4}(p)$ is only dependent on the inter-agent
distances and the sign of the volume; hence, it is clear that $\vartheta
_{4}(p)=\vartheta _{4}(\hat{p})$.

Repeating this analysis for $\zeta _{l}$, $\varphi _{l}$, and $\vartheta_{l} 
$, $l=5,\ldots ,N$ leads to the conclusion that $\Lambda (p)=\Lambda (\hat{p}%
)$.

\textit{(Proof of $\Leftarrow $)} If $\Lambda (p)=\Lambda (\hat{p})$, then $%
\zeta _{l}(p)=\zeta _{l}(\hat{p})$, $l=2,\ldots ,N$, $\varphi
_{l}(p)=\varphi _{l}(\hat{p})$, $l=3,\ldots ,N$, and $\vartheta
_{l}(p)=\vartheta _{l}(\hat{p})$, $l=4,\ldots ,N$. From $\zeta _{2}(p)=\zeta
_{2}(\hat{p})$, we obtain 
\begin{equation}
\left\Vert p_{21}\right\Vert =\left\Vert \hat{p}_{21}\right\Vert
\label{eq:3D-OB:p-21-two-frame}
\end{equation}%
where (\ref{eq:Lambda_i}) was used. From $\zeta _{3}(p)=\zeta _{3}(\hat{p})$
and (\ref{eq:3D-OB:zeta-l}), we have 
\begin{align}
& \left\Vert p_{31}\right\Vert \left\Vert p_{21}\right\Vert \frac{\left\Vert
p_{31}\right\Vert ^{2}+\left\Vert p_{21}\right\Vert ^{2}-\left\Vert
p_{32}\right\Vert ^{2}}{2\left\Vert p_{31}\right\Vert \left\Vert
p_{21}\right\Vert }  \notag \\
=& \frac{\left\Vert p_{31}\right\Vert ^{2}+\left\Vert p_{21}\right\Vert
^{2}-\left\Vert p_{32}\right\Vert ^{2}}{2}  \notag \\
=& \frac{\left\Vert \hat{p}_{31}\right\Vert ^{2}+\left\Vert \hat{p}%
_{21}\right\Vert ^{2}-\left\Vert \hat{p}_{32}\right\Vert ^{2}}{2}
\label{eq:3D-OB:zeta3-equal-base}
\end{align}%
where the law of cosines were used. This leads to 
\begin{equation}
\left\Vert p_{31}\right\Vert ^{2}+\left\Vert p_{21}\right\Vert
^{2}-\left\Vert p_{32}\right\Vert ^{2}=\left\Vert \hat{p}_{31}\right\Vert
^{2}+\left\Vert \hat{p}_{21}\right\Vert ^{2}-\left\Vert \hat{p}%
_{32}\right\Vert ^{2}.  \label{eq:3D-OB:zeta3-equal-two-frame}
\end{equation}
From $\varphi _{3}(p)=\varphi _{3}(\hat{p})$ and (\ref{eq:varphi-3}), we get 
\begin{equation}
\breve{S}_{123}(p)=\breve{S}_{123}(\hat{p}).
\label{eq:3D-OB:S-equal-two-frame}
\end{equation}%
Applying (\ref{regular area}) to (\ref{eq:3D-OB:S-equal-two-frame}) yields 
\begin{align}
& 2\left\Vert p_{21}\right\Vert ^{2}\left\Vert p_{31}\right\Vert
^{2}+2\left\Vert p_{21}\right\Vert ^{2}\left\Vert p_{32}\right\Vert
^{2}+2\left\Vert p_{31}\right\Vert ^{2}\left\Vert p_{32}\right\Vert ^{2} 
\notag \\
& \quad -\left\Vert p_{21}\right\Vert ^{4}-\left\Vert p_{31}\right\Vert
^{4}-\left\Vert p_{32}\right\Vert ^{4}  \notag \\
& =2\left\Vert \hat{p}_{21}\right\Vert ^{2}\left\Vert \hat{p}%
_{31}\right\Vert ^{2}+2\left\Vert \hat{p}_{21}\right\Vert ^{2}\left\Vert 
\hat{p}_{32}\right\Vert ^{2}+2\left\Vert \hat{p}_{31}\right\Vert
^{2}\left\Vert \hat{p}_{32}\right\Vert ^{2}  \notag \\
& \quad -\left\Vert \hat{p}_{21}\right\Vert ^{4}-\left\Vert \hat{p}%
_{31}\right\Vert ^{4}-\left\Vert \hat{p}_{32}\right\Vert ^{4}.
\label{eq:3D-OB:varphi4-equal-two-frame}
\end{align}%
Combining (\ref{eq:3D-OB:p-21-two-frame}), (\ref%
{eq:3D-OB:zeta3-equal-two-frame}), and (\ref%
{eq:3D-OB:varphi4-equal-two-frame}) gives $\left\Vert p_{31}\right\Vert
=\left\Vert \hat{p}_{31}\right\Vert $ and $\left\Vert p_{32}\right\Vert
=\left\Vert \hat{p}_{32}\right\Vert $.

Next, from $\zeta _{4}(p)=\zeta _{4}(\hat{p})$, $\varphi _{4}(p)=\varphi
_{4}(\hat{p})$, and $\vartheta _{4}(p)=\vartheta _{4}(\hat{p})$, we get 
\begin{equation}
\left\Vert p_{41}\right\Vert ^{2}+\left\Vert p_{21}\right\Vert
^{2}-\left\Vert p_{42}\right\Vert ^{2}=\left\Vert \hat{p}_{41}\right\Vert
^{2}+\left\Vert \hat{p}_{21}\right\Vert ^{2}-\left\Vert \hat{p}%
_{42}\right\Vert ^{2},  \label{eq:3D-zeta4-tilde-equal-0-base}
\end{equation}%
\begin{align}
& -\left\Vert p_{21}\right\Vert ^{4}+\left\Vert p_{21}\right\Vert
^{2}\left\Vert p_{31}\right\Vert ^{2}+\left\Vert p_{21}\right\Vert
^{2}\left\Vert p_{32}\right\Vert ^{2}+\left\Vert p_{21}\right\Vert
^{2}\left\Vert p_{41}\right\Vert ^{2}  \notag
\label{eq:3D-varphi4-tilde-equal-0-base} \\
& +\left\Vert p_{21}\right\Vert ^{2}\left\Vert p_{42}\right\Vert
^{2}-2\left\Vert p_{43}\right\Vert ^{2}\left\Vert p_{21}\right\Vert
^{2}-\left\Vert p_{31}\right\Vert ^{2}\left\Vert p_{41}\right\Vert ^{2} 
\notag \\
& +\left\Vert p_{31}\right\Vert ^{2}\left\Vert p_{42}\right\Vert
^{2}+\left\Vert p_{32}\right\Vert ^{2}\left\Vert p_{41}\right\Vert
^{2}-\left\Vert p_{32}\right\Vert ^{2}\left\Vert p_{42}\right\Vert ^{2} 
\notag \\
& =-\left\Vert \hat{p}_{21}\right\Vert ^{4}+\left\Vert \hat{p}%
_{21}\right\Vert ^{2}\left\Vert \hat{p}_{31}\right\Vert ^{2}+\left\Vert \hat{%
p}_{21}\right\Vert ^{2}\left\Vert \hat{p}_{32}\right\Vert ^{2}+\left\Vert 
\hat{p}_{21}\right\Vert ^{2}\left\Vert \hat{p}_{41}\right\Vert ^{2}  \notag
\\
& +\left\Vert \hat{p}_{21}\right\Vert ^{2}\left\Vert \hat{p}_{42}\right\Vert
^{2}-2\left\Vert \hat{p}_{43}\right\Vert ^{2}\left\Vert \hat{p}%
_{21}\right\Vert ^{2}-\left\Vert \hat{p}_{31}\right\Vert ^{2}\left\Vert \hat{%
p}_{41}\right\Vert ^{2}  \notag \\
& +\left\Vert \hat{p}_{31}\right\Vert ^{2}\left\Vert \hat{p}_{42}\right\Vert
^{2}+\left\Vert \hat{p}_{32}\right\Vert ^{2}\left\Vert \hat{p}%
_{41}\right\Vert ^{2}-\left\Vert \hat{p}_{32}\right\Vert ^{2}\left\Vert \hat{%
p}_{42}\right\Vert ^{2},
\end{align}
and
\begin{align}
& \left\vert 
\begin{array}{ccccc}
0 & 1 & 1 & 1 & 1 \\ 
1 & 0 & \left\Vert p_{21}\right\Vert ^{2} & \left\Vert p_{31}\right\Vert ^{2}
& \left\Vert p_{41}\right\Vert ^{2} \\ 
1 & \left\Vert p_{21}\right\Vert ^{2} & 0 & \left\Vert p_{32}\right\Vert ^{2}
& \left\Vert p_{42}\right\Vert ^{2} \\ 
1 & \left\Vert p_{31}\right\Vert ^{2} & \left\Vert p_{32}\right\Vert ^{2} & 0
& \left\Vert p_{43}\right\Vert ^{2} \\ 
1 & \left\Vert p_{41}\right\Vert ^{2} & \left\Vert p_{42}\right\Vert ^{2} & 
\left\Vert p_{43}\right\Vert ^{2} & 0%
\end{array}
\right\vert  \notag \\
& =\left\vert 
\begin{array}{ccccc}
0 & 1 & 1 & 1 & 1 \\ 
1 & 0 & \left\Vert \hat{p}_{21}\right\Vert ^{2} & \left\Vert \hat{p}%
_{31}\right\Vert ^{2} & \left\Vert \hat{p}_{41}\right\Vert ^{2} \\ 
1 & \left\Vert \hat{p}_{21}\right\Vert ^{2} & 0 & \left\Vert \hat{p}%
_{32}\right\Vert ^{2} & \left\Vert \hat{p}_{42}\right\Vert ^{2} \\ 
1 & \left\Vert \hat{p}_{31}\right\Vert ^{2} & \left\Vert \hat{p}%
_{32}\right\Vert ^{2} & 0 & \left\Vert \hat{p}_{43}\right\Vert ^{2} \\ 
1 & \left\Vert \hat{p}_{41}\right\Vert ^{2} & \left\Vert \hat{p}%
_{42}\right\Vert ^{2} & \left\Vert \hat{p}_{43}\right\Vert ^{2} & 0%
\end{array}%
\right\vert  \label{eq:3D-vartheta4-tilde-equal-0-base}
\end{align}%
Since we know that $\left\Vert p_{21}\right\Vert =\left\Vert \hat{p}%
_{21}\right\Vert $, $\left\Vert p_{31}\right\Vert =\left\Vert \hat{p}%
_{31}\right\Vert $, and $\left\Vert p_{32}\right\Vert =\left\Vert \hat{p}%
_{32}\right\Vert $, we can use (\ref{eq:3D-zeta4-tilde-equal-0-base}), (\ref%
{eq:3D-varphi4-tilde-equal-0-base}), and (\ref%
{eq:3D-vartheta4-tilde-equal-0-base}) to show $\left\Vert p_{41}\right\Vert
=\left\Vert \hat{p}_{41}\right\Vert $, $\left\Vert p_{42}\right\Vert
=\left\Vert \hat{p}_{42}\right\Vert $, and $\left\Vert p_{43}\right\Vert
=\left\Vert \hat{p}_{43}\right\Vert $.

Repeating the same analysis on $\zeta _{l}$, $\varphi _{l}$, and $\vartheta
_{l}$, $l=5,\ldots ,N$ gives $\mathbf{V}(p)=\mathbf{V}(\hat{p})$ and $\gamma
(p)=\gamma (\hat{p})$. Then, by Lemma \ref{lem:scgt-3d}, we know $F$ and $%
\hat{F}$ are strongly congruent.

\subsection{Lemma \protect\ref{lem:3d-error-variables}}

\label{Sec:proof:error-variables}

Since agents $i$, $j$, and $k$ are located at their desired inter-agent
distances, we can let $p_{i}=\left[ -d_{ji}/2,0,0\right] $, $p_{j}=\left[
d_{ji}/2,0,0\right] $, $p_{k}=\left[ x_{k},y_{k},0\right] $, and $p_{l}=%
\left[ x_{l},y_{l},z_{l}\right] $ without the loss of generality. We also
assume $y_{k}>0$ for simplicity. From the above coordinates, we get 
\begin{equation}
\left\Vert p_{li}\right\Vert ^{2}=\left( x_{l}+\frac{d_{ji}}{2}\right)
^{2}+y_{l}^{2}+z_{l}^{2}  \label{eq:3D-OB:p-li-square}
\end{equation}%
\begin{equation}
\left\Vert p_{lj}\right\Vert ^{2}=\left( x_{l}-\frac{d_{ji}}{2}\right)
^{2}+y_{l}^{2}+z_{l}^{2}.  \label{eq:3D-OB:p-lj-square}
\end{equation}%
After solving for $x_{l}$, we arrive at 
\begin{equation}
x_{l}=\frac{\left\Vert p_{li}\right\Vert ^{2}-\left\Vert p_{lj}\right\Vert
^{2}}{2d_{ji}}.  \label{eq:3d-sol-x}
\end{equation}%
When $\tilde{\zeta}_{l}=0$, we know from (\ref{eq:3D-OB:zeta-l}) and (\ref%
{eq:zeta-error-3d}) that $\left\Vert p_{li}\right\Vert ^{2}-\left\Vert
p_{lj}\right\Vert ^{2}=d_{li}^{2}-d_{lj}^{2}$. Substituting this into (\ref%
{eq:3d-sol-x}) gives 
\begin{equation}
x_{l}=\dfrac{d_{li}^{2}-d_{lj}^{2}}{2d_{ji}}.  \label{xl}
\end{equation}%
This means that all points satisfying $\tilde{\zeta}_{l}=0$ lie on the plane
defined by (\ref{xl}) (blue plane in Figure \ref{fig:ortho-3D-errors}),
which is normal to vector $p_{ji}$ .

Now, substituting the known coordinates of $p_{i}$, $p_{j}$, $p_{k}$, and $%
p_{l}$ into (\ref{eq:3D-OB:varphi-l}) yields 
\begin{equation}
\varphi _{l}=d_{ji}\left\Vert n_{ijk}^{\ast }\right\Vert y_{l}
\label{varphi_l1}
\end{equation}%
where $n_{ijk}^{\ast }:=n_{ijk}(p^{\ast })$ and 
\begin{equation*}
\left\Vert n_{ijk}^{\ast }\right\Vert =\left\{ 
\begin{array}{ll}
\left\Vert p_{ki}^{\ast }\times p_{kj}^{\ast }\right\Vert & \text{if }%
\{i,j,k\}\neq \{1,2,3\} \\ 
1 & \text{if }\{i,j,k\}=\{1,2,3\}.%
\end{array}%
\right.
\end{equation*}%
When $\tilde{\varphi}_{l}=0$, we have from (\ref{varphi_l1}) and (\ref%
{eq:varphi-error-3d}) that 
\begin{equation}
y_{l}=\frac{\varphi _{l}^{\ast }}{d_{ji}\left\Vert n_{ijk}^{\ast
}\right\Vert }.  \label{eq:3d-sol-y}
\end{equation}%
This indicates that all the points satisfying $\tilde{\varphi}_{l}=0$ lie on
the plane defined by (\ref{eq:3d-sol-y}) (red plane in Figure \ref%
{fig:ortho-3D-errors}), which is orthogonal to plane $\tilde{\zeta}_{l}=0$.

Finally, we can use (\ref{eq:3D-OB:vartheta-l}) and (\ref{nstar}) to write%
\begin{equation}
\vartheta _{l}=p_{li}^{\intercal }\frac{n_{ijk}}{\left\Vert
n_{ijk}\right\Vert }\left\Vert n_{ijk}\right\Vert =p_{li}^{\intercal }\frac{%
p_{ki}\times p_{kj}}{\left\Vert p_{ki}\times p_{kj}\right\Vert }\left\Vert
n_{ijk}\right\Vert .  \label{varthetal1}
\end{equation}%
From the known coordinates of $p_{i}$, $p_{j}$, $p_{k}$, and $p_{l}$, we
obtain $p_{li}^{\intercal }\left( p_{ki}\times p_{kj}\right) =-
z_{l}\left\Vert n_{ijk}\right\Vert $. Since $\left\Vert p_{ji}\right\Vert
=d_{ji}$, $\left\Vert p_{ki}\right\Vert =d_{ki}$, and $\left\Vert
p_{kj}\right\Vert =d_{kj}$, we get from (\ref{eq:n}) that $\left\Vert
n_{ijk}\right\Vert =\left\Vert n_{ijk}^{\ast }\right\Vert $. Therefore, (\ref%
{varthetal1}) simplifies to 
\begin{equation}
\vartheta _{l} = - \left\Vert n_{ijk}^{\ast }\right\Vert z_{l}.
\label{varthetal2}
\end{equation}%
When $\tilde{\vartheta}_{l}=0$, we can use (\ref{eq:vartheta-error-3d}) and (%
\ref{varthetal2}) to get 
\begin{equation}
z_{l} = - \frac{\vartheta _{l}^{\ast }}{\left\Vert n_{ijk}^{\ast
}\right\Vert }.  \label{eq:3d-sol-z}
\end{equation}%
That is, all the points satisfying $\tilde{\vartheta}_{l}=0$ are on the
plane defined by (\ref{eq:3d-sol-z}) (green plane in Figure \ref%
{fig:ortho-3D-errors}), which is orthogonal to planes $\tilde{\zeta}_{l}=0$
and $\tilde{\varphi}_{l}=0$. 
\begin{figure}[tbph]
\centering
\adjincludegraphics[scale=0.55, trim={{0.0\width}
			{0.0\height} {0.0\width} {0.00\height}},clip]{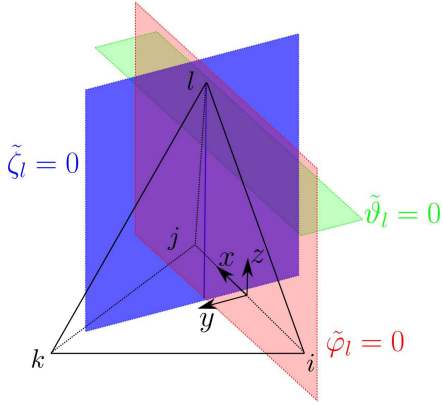}
\caption[Graphical representation of projection errors in 3D.]{Graphical
representation of projection errors: Points on the blue plane satisfy $%
\tilde{\protect\zeta}_{l}=0$; points on the red plane satisfy $\tilde{%
\protect\varphi}_{l}=0$; points on the green plane satisfy $\tilde{\protect%
\vartheta}_{l}=0$.}
\label{fig:ortho-3D-errors}
\end{figure}

\section{Desired Projection Variables}

\label{Sec:desired-projection-variables-3d}

In the following, we show how the desired values for the 3D projection
variables can be computed in terms of the desired inter-agent distances.

\noindent First, we have from (\ref{eq:3D-OB:zeta-l}) that 
\begin{align}
\zeta _{l}^{\ast }=& (p_{li}^{\ast })^{\intercal }p_{ji}^{\ast }  \notag \\
=& \left\Vert p_{li}^{\ast }\right\Vert \left\Vert p_{ji}^{\ast }\right\Vert 
\dfrac{\left\Vert p_{li}^{\ast }\right\Vert ^{2}+\left\Vert p_{ji}^{\ast
}\right\Vert ^{2}-\left\Vert p_{lj}^{\ast }\right\Vert ^{2}}{2\left\Vert
p_{li}^{\ast }\right\Vert \left\Vert p_{ji}^{\ast }\right\Vert }  \notag \\
=& \dfrac{d_{li}^{2}+d_{ji}^{2}-d_{lj}^{2}}{2}.  \label{zeta*_l}
\end{align}

Next, from (\ref{eq:3D-OB:varphi-l}): 
\begin{eqnarray}
\varphi _{l}^{\ast } &=&(p_{li}^{\ast })^{\intercal }(n_{ijk}^{\ast }\times
p_{ji}^{\ast })=(n_{ijk}^{\ast })^{\intercal }\left( p_{ji}^{\ast }\times
p_{li}^{\ast }\right)  \notag \\
&=&\left\Vert n_{ijk}^{\ast }\right\Vert \frac{\left( n_{ijk}^{\ast }\right)
^{\intercal }}{\left\Vert n_{ijk}^{\ast }\right\Vert }\left( p_{ji}^{\ast
}\times p_{li}^{\ast }\right) .  \label{varphi*_l0}
\end{eqnarray}
Notice from (\ref{eq:n}) that 
\begin{equation}
\frac{n_{ijk}^{\ast }}{\left\Vert n_{ijk}^{\ast }\right\Vert }=\frac{%
p_{ki}^{\ast }\times p_{kj}^{\ast }}{\left\Vert p_{ki}^{\ast }\times
p_{kj}^{\ast }\right\Vert }  \label{nstar}
\end{equation}
for both $\{i,j,k\}\neq \{1,2,3\}$ and $\{i,j,k\}=\{1,2,3\}$. After applying
(\ref{nstar}) to (\ref{varphi*_l0}), we obtain 
\begin{equation}
\varphi _{l}^{\ast }=\frac{\left\Vert n_{ijk}^{\ast }\right\Vert }{%
\left\Vert p_{ki}^{\ast }\times p_{kj}^{\ast }\right\Vert }\left(
p_{ki}^{\ast }\times p_{kj}^{\ast }\right) ^{\intercal }\left( p_{ji}^{\ast
}\times p_{li}^{\ast }\right) .  \label{varphi*_l2}
\end{equation}
From (\ref{eq:n}), 
\begin{equation}
\frac{\left\Vert n_{ijk}^{\ast }\right\Vert }{\left\Vert p_{ki}^{\ast
}\times p_{kj}^{\ast }\right\Vert }=\left\{ 
\begin{array}{ll}
1 & \text{if }\{i,j,k\}\neq \{1,2,3\} \\ 
\dfrac{1}{\left\Vert p_{31}^{\ast }\times p_{32}^{\ast }\right\Vert } & 
\text{if }\{i,j,k\}=\{1,2,3\}.%
\end{array}%
\right.  \label{term1}
\end{equation}
Given (\ref{regular area}) and the fact that $\left\Vert p_{ji}^{\ast
}\right\Vert =d_{ji}$, $\forall (j,i)\in \mathcal{E}$, it is obvious that (%
\ref{term1}) is only dependent on the desired distances. Now, 
\begin{eqnarray}
&&\left( p_{ki}^{\ast }\times p_{kj}^{\ast }\right) ^{\intercal }\left(
p_{ji}^{\ast }\times p_{li}^{\ast }\right)  \notag \\
&=&\left( -p_{kj}^{\ast }\times p_{ki}^{\ast }\right) ^{\intercal }\left(
-p_{li}^{\ast }\times p_{ji}^{\ast }\right)  \notag \\
&=&\left( -p_{kj}^{\ast }\times p_{kj}^{\ast }-p_{kj}^{\ast }\times
p_{ji}^{\ast }\right) ^{\intercal }\left( -p_{lj}^{\ast }\times p_{ji}^{\ast
}-p_{ji}^{\ast }\times p_{ji}^{\ast }\right)  \notag \\
&=&\left( p_{kj}^{\ast }\times p_{ij}^{\ast }\right) ^{\intercal }\left(
p_{lj}^{\ast }\times p_{ij}^{\ast }\right)  \notag \\
&=&\left\Vert p_{kj}^{\ast }\right\Vert \left\Vert p_{lj}^{\ast }\right\Vert 
\dfrac{\left\Vert p_{kj}^{\ast }\right\Vert ^{2}+\left\Vert p_{lj}^{\ast
}\right\Vert ^{2}-\left\Vert p_{lk}^{\ast }\right\Vert ^{2}}{2\left\Vert
p_{kj}^{\ast }\right\Vert \left\Vert p_{lj}^{\ast }\right\Vert }\left\Vert
p_{ij}^{\ast }\right\Vert ^{2}  \notag \\
&&-\left\Vert p_{kj}^{\ast }\right\Vert \left\Vert p_{ij}^{\ast }\right\Vert 
\dfrac{\left\Vert p_{kj}^{\ast }\right\Vert ^{2}+\left\Vert p_{ij}^{\ast
}\right\Vert ^{2}-\left\Vert p_{ik}^{\ast }\right\Vert ^{2}}{2\left\Vert
p_{kj}^{\ast }\right\Vert \left\Vert p_{ij}^{\ast }\right\Vert }  \notag \\
&\cdot &\left\Vert p_{ij}^{\ast }\right\Vert \left\Vert p_{lj}^{\ast
}\right\Vert \dfrac{\left\Vert p_{ij}^{\ast }\right\Vert ^{2}+\left\Vert
p_{lj}^{\ast }\right\Vert ^{2}-\left\Vert p_{li}^{\ast }\right\Vert ^{2}}{%
2\left\Vert p_{ij}^{\ast }\right\Vert \left\Vert p_{lj}^{\ast }\right\Vert },
\label{term2}
\end{eqnarray}%
which is also only dependent on the desired distances. Thus, (\ref%
{varphi*_l2}) is only a function of $d_{ji}$, $\forall (j,i)\in \mathcal{E}$.

Finally, it follows from (\ref{eq:3D-OB:vartheta-l}) that 
\begin{equation}
\vartheta _{l}^{\ast }=(p_{li}^{\ast })^{\intercal }n_{ijk}^{\ast
}=(p_{li}^{\ast })^{\intercal }n_{ijk}^{\ast }-(p_{ki}^{\ast })^{\intercal
}n_{ijk}^{\ast }=(p_{lk}^{\ast })^{\intercal }n_{ijk}^{\ast }.
\label{vartheta*_l0}
\end{equation}%
If $\{i,j,k\}\neq \{1,2,3\}$, then from (\ref{eq:n}) 
\begin{align}
\vartheta _{l}^{\ast }=& (p_{lk}^{\ast })^{\intercal }\left( p_{ki}^{\ast
}\times p_{kj}^{\ast }\right)  \notag \\
=& (p_{lk}^{\ast })^{\intercal }\left[ \left( p_{li}^{\ast }-p_{lk}^{\ast
}\right) \times \left( p_{lj}^{\ast }-p_{lk}^{\ast }\right) \right]  \notag
\\
=& (p_{lk}^{\ast })^{\intercal }\left( p_{li}^{\ast }\times p_{lj}^{\ast
}-p_{li}^{\ast }\times p_{lk}^{\ast }-p_{lk}^{\ast }\times p_{lj}^{\ast
}+p_{lk}^{\ast }\times p_{lk}^{\ast }\right)  \notag \\
=& (p_{lk}^{\ast })^{\intercal }\left( p_{li}^{\ast }\times p_{lj}^{\ast
}\right) =-6V_{ijkl}^{\ast }  \label{vartheta*_l1}
\end{align}%
where $V_{ijkl}^{\ast }:=V(p_{i}^{\ast },p_{j}^{\ast },p_{k}^{\ast
},p_{l}^{\ast })$ from (\ref{eq:signed-volume}). Note that $V_{ijkl}^{\ast }$
can be calculated using (\ref{eq:V*}) with $\left\Vert p_{ji}\right\Vert
=d_{ji}$ where the sign is based on the desired ordering of vertices $i,j,k$
per the convention described in Section \ref{Sec: Strong Congr}. If $%
\{i,j,k\}=\{1,2,3\}$, then from (\ref{eq:n}) and the calculations in (\ref%
{vartheta*_l1}), we arrive at 
\begin{equation}
\vartheta _{l}^{\ast }=-\frac{6V_{123l}^{\ast }}{\left\Vert p_{31}^{\ast
}\times p_{32}^{\ast }\right\Vert }  \label{vartheta*_l2}
\end{equation}%
where (\ref{regular area}) is used to calculate the denominator in terms of
the desired distances.

The desired projections $\zeta _{2}^{\ast }$, $\zeta _{3}^{\ast }$, and $%
\varphi _{3}^{\ast }$ are special cases of the above variables. From (\ref%
{eq:zeta-2}), we have that 
\begin{equation}
\zeta _{2}^{\ast }=\left( p_{21}^{\ast }\right) ^{\intercal }\dfrac{%
p_{21}^{\ast }}{\left\Vert p_{21}^{\ast }\right\Vert }=\left\Vert
p_{21}^{\ast }\right\Vert =d_{21}.
\end{equation}
From (\ref{eq:zeta-3}) and (\ref{zeta*_l}), we obtain 
\begin{equation*}
\zeta _{3}^{\ast }=(p_{31}^{\ast })^{\intercal }p_{21}^{\ast }=\dfrac{%
d_{31}^{2}+d_{21}^{2}-d_{32}^{2}}{2}.
\end{equation*}
Finally, from (\ref{eq:varphi-3}), we have 
\begin{align}
\varphi _{3}^{\ast }=& \left( p_{31}^{\ast }\right) ^{\intercal }\left(
n_{123}^{\ast }\times p_{21}^{\ast }\right) 
= \left( n_{123}^{\ast }\right) ^{\intercal }\left( p_{31}^{\ast }\times
p_{32}^{\ast }\right)  \notag \\
=& \left\Vert p_{31}^{\ast }\times p_{32}^{\ast }\right\Vert 
= 2\breve{S}_{123}^{\ast }  \label{eq:varphi-3d}
\end{align}%
where $\breve{S}_{123}^{\ast }:=\breve{S}_{123}(p_{31}^{\ast}, p_{32}^{\ast})$ from (\ref{regular area}) is only dependent on the desired distances.

\end{appendices}

\bibliographystyle{dcu}
\bibliography{References}

\end{document}